\newenvironment{proof}{\noindent{\bf{Proof.}}}{\hfill$\Box$\newline}
\newtheorem{corollary}{Corollary}[section]
\newtheorem{lemma}{Lemma}[section]
\newtheorem{proposition}{Proposition}[section]
\newtheorem{theorem}{Theorem}[section]
\begin{document}

\title{\textsc{Symmetry Group of Ordered Hamming Block Space}}
\date{}
\author{Luciano Panek\thanks{%
Centro de Engenharias e Ci\^{e}ncias Exatas, UNIOESTE, Av. Tarqu\'{\i}nio
Joslin dos Santos, 1300, CEP 85870-650, Foz do Igua\c{c}u, PR, Brazil.
Email: lucpanek@gmail.com} \and Nayene Michele Pai\~{a}o Panek\thanks{%
Centro de Engenharias e Ci\^{e}ncias Exatas, UNIOESTE, Av. Tarqu\'{\i}nio
Joslin dos Santos, 1300, CEP 85870-650, Foz do Igua\c{c}u, PR, Brazil.
Email: nayene@gmail.com}}
\maketitle

\begin{abstract}
Let $P = (\{1,2,\ldots,n,\leq)$ be a poset that is an union of disjoint chains of the same length and $V=\mathbb{F}_q^N$ be the space of $N$-tuples over the finite field $\mathbb{F}_q$. Let $V_i = \mathbb{F}_q^{k_i}$, $1 \leq i \leq n$, be a family of finite-dimensional linear spaces such that $k_1+k_2+\ldots +k_n = N$ and let $V = V_1 \oplus V_2 \oplus \ldots \oplus V_n$ endow with the poset block metric $d_{(P,\pi)}$ induced by the poset $P$ and the partition $\pi=(k_1,k_2,\ldots,k_n)$, encompassing both Niederreiter-Rosenbloom-Tsfasman metric and error-block metric. In this paper, we give a complete description of group of symmetries of the metric space $(V,d_{(P,\pi)})$, called the ordered Hammming block space. In particular, we reobtain the group of symmetries of the Niederreiter-Rosenbloom-Tsfasman space and obtain the group of symmetries of the error-block metric space.

\vspace{0.5cm}

\textit{Key words}: \textrm{Error-block metric, poset metric, Niederreiter-Rosenbloom-Tsfasman metric, ordered Hamming metric, symmetries, isometries}.
\end{abstract}



\section{Introduction}

One of the main classical problem of the coding theory is to find sets with $
q^{k}$ elements in $\mathbb{F}_{q}^{n}$, the space of $n$-tuples over the
finite field $\mathbb{F}_{q}$, with the largest minimum distance possible.
There are many possible metrics that can be defined in $\mathbb{F}_{q}^{n}$,
the most common ones are the Hamming and Lee metrics.

In 1987 Harald Niederreiter generalized the classical problem of coding
theory (see \cite{HN1}). Brualdi, Graves and Lawrence (see \cite{Bru}) also
provided in 1995 a wider situation for the above problem: using partially
ordered sets (\textit{posets}) and defining the concept of poset codes, they started to study codes with a poset metric. This has been a fruitful approach, since many new perfect codes have been found with such poset metrics \cite{HK, L}. Later Feng, Xu and Hickernell (\cite{FXH}, 2006) introduced the block metric, by partitioning the set of coordinate positions of $\mathbb{F}_{q}^{n}$ into families of blocks. Both kinds of metrics are generalizations of the Hamming metric, in the sense that the latter is attained when considering the trivial order (in the poset case) or one-dimensional blocks (in the block metric case). In 2008, Alves, Panek and Firer (see \cite{Alves}) combined the poset and block structure, obtaining a further generalization, the poset block metrics.

A particular instance of poset block codes and spaces, with one-dimensional blocks, are the spaces introduced by
Niederreiter in 1991 (see \cite{HN1}) and Rosenbloom and Tsfasman in 1997 (see \cite{RT}), where the posets taken into
consideration have a finite number of disjoint chains of equal size, that
is, it is isomorphic to the order $P=P_{1}\cup P_{2}\cup\ldots\cup P_{m}$
such that
\[
P_{i+1}=\left\{ in+1,in+2,\ldots,\left( i+1\right) n\right\}
\]
and
\[
in+1<in+2<\ldots<\left( i+1\right) n
\]
are the only strict comparabilities for each $i\in\left\{ 0,1,\ldots
,m-1\right\} $. This spaces are of special interest since there are several rather disparate applications, as noted by Rosenbloom and Tsfasman (see \cite{RT}) and Park e Barg (see \cite{PB}).

The description of linear symmetries of a poset space started with the study
of particular poset spaces (as Lee's work on Niederreiter-Rosenbloom-Tsfasman spaces \cite
{Lee}; Cho and Kim's work on crown spaces \cite{CH}; Kim's work on weak
spaces \cite{S}), until Panek, Firer, Kim and Hyun \cite{PFKH} gave a full
description of the group of linear symmetries of a poset space. The full description of the group of linear symmetries of a poset block space were determined by Alves, Panek and Firer \cite{Alves}. The description of symmetries (not necessarily linear ones) of a poset space were studied by Panek, Alves and Firer (in the case of a product of Rosenbloom-Tsfasman space, see \cite{Alves2}) and by Hyun (to any poset, see \cite{H}). In this
work, we describe the symmetry group (not necessarily linear ones) of the
poset block space that is a finite union of disjoint chains of same length, the Niederreiter-Rosenbloom-Tsfasman block space. We call this space the \textit{ordered Hamming block space}.

In the section 2, we introduce briefly the main concepts and definitions
used in this work. In the section 3, we study the simple, but inspiring,
case of posets determining a single chain (Theorem \ref{prop3}) and finally,
in the last two sections, we describe the symmetry group of ordered Hamming block space (Theorem \ref{Theorem RT}).

\section{Poset Block Metric Space}

Let $\left[ n\right] :=\{1,2,\ldots,n\}$ be a finite set with $n$ elements
and let $\leq$ be a partial order on $\left[ n\right] $. We call the pair $%
P:=(\left[ n\right] ,\leq)$ a \emph{poset} and say that $k$ is \emph{smaller
than} $j$ if $k\leq j$ and $k\neq j$. An \emph{ideal} in $(\left[ n\right]
,\leq)$ is a subset $I\subseteq\left[ n\right] $ that contains every element
that is smaller than some of its elements, i.e., if $j\in I$ and $k\leq j$
then $k\in I$. Given a subset $X\subseteq [n]$, we denote by $\langle
X\rangle$ the smallest ideal containing $X$, called the \emph{ideal
generated by} $X$. An order on the finite set $[n]$ is called a \emph{linear
order} or a \emph{chain} if every two elements are comparable, that is,
given $i,j\in[n]$ we have that either $i\leq j$ or $j\leq i$. In this case, $%
n$ is said to be \emph{the length} of the chain and the set can be labeled
in such a way that $i_{1}<i_{2}<\ldots<i_{n}$. For the simplicity of the
notation, in this situation we will always assume that the order $P$ is
defined as $1<2<\ldots<n$.

Let $q$ be a power of a prime, $\mathbb{F}_{q}$ be the finite field of $q$
elements and $V:=\mathbb{F}_{q}^{N}$ the $N$-dimensional vector space of $N$%
-tuples over $\mathbb{F}_{q}$. Let $\pi = (k_1,k_2,\ldots,k_n)$ be a partition of $N$: $$N=k_1+k_2+\ldots+k_n$$ with $k_i >0$ a integer. For each integer $k_i$, let $V_i:=\mathbb{F}_q^{k_i}$ be the $k_i$-dimensional vector space over the finite field $\mathbb{F}_q$ and define $$V=V_1 \oplus V_2 \oplus \ldots \oplus V_n,$$ called the \textit{$\pi$-direct sum decomposition} of $V$. A vector $v \in V$ can be uniquely decomposed as $$v=v_1+v_2+ \ldots +v_n$$ with $v_i \in V_i$ for each $1 \leq i \leq n$. We will call this the \textit{$\pi$-direct sum decomposition} of $v$. Given a poset $P=([n],\leq)$, we define the \textit{poset block weight} $\omega_{(P,\pi)}$ (or simply the \textit{$(P,\pi)$-weight}) of a vector $v=v_1+v_2+ \ldots +v_n$ to be $$\omega_{(P,\pi)}(v):=\mid \langle supp(u) \rangle \mid $$ where $supp(v):=\{i \in [n]:v_i \neq 0\}$ is the \emph{$\pi$-support} of the
vector $v$ and $|X|$ is the cardinality of the set $X$. The block structure is said to be \textit{trivial} when $k_i=1$ for all $1\leq i\leq n$. The $(P,\pi)$-weight induces a metric $d_{(P,\pi)}$ on $V$, that we call the \textit{poset block metric} (or simply \textit{$(P,\pi)$-metric}): $$ d_{(P,\pi)}(u,v) := \omega_{(P,\pi)}(u-v). $$ The pair $(V,d_{(P,\pi)})$ is a metric space and where no ambiguity may rise, we say it is a \emph{poset block space}, or simply a \textit{$(P,\pi)$-space}.

A \emph{symmetry of} $(V,d_{(P,\pi)})$ is a bijection $T:V\longrightarrow V$ that
preserves distance:
\[
d_{(P,\pi)}(T(u),T(v))=d_{(P,\pi)}(u,v)
\]
for all $u,v\in V$. The set $Symm(V,d_{(P,\pi)})$ of all symmetries of $(V,d_{(P,\pi)})$ is a group with the natural
operation of composition of functions, and we call it the \emph{symmetry group} of $(V,d_{(P,\pi)})$. An \textit{automorphism} is a linear symmetry.

The description of the full symmetry group may be of help in the study of
non-linear codes. Besides other applications, linear symmetries are used to
divide linear codes in equivalence classes, since they take subspace into
subspace and preserve dimension and minimum distance. Symmetries, in
general, may take linear codes onto non-linear ones, but preserve all metric
data, such as minimal distance and weight of a code and also the generalized
Wei weights, capacity of error correction and number of elements. So it is
just natural to call two non-linear codes \emph{equivalent} if one is the
image of the other under a symmetry.

In \cite{Alves2} the group of symmetries of a product of Niederreiter-Rosenbloom-Tsfasman spaces is characterized. In \cite{H} is studied a subgroup of the full symmetry group for any given poset. In this work we will describe the full
symmetry group of an important class of poset block spaces, namely, those induced
by posets  that are an union of disjoint chains of the same length. This class
includes the block metric spaces over chains and the Niederreiter-Rosembloom-Tsfasman spaces with trivial block structures.

We remark that initial idea is the same as in \cite{Alves2}. The main differences are that we follow a more coordinate free approach an that the dimensions of the blocks pose a new restraint. We first study the symmetry group of ordered Hamming space induced for one simple chain (Theorem \ref{prop3}), analogous to those of \cite{Alves2}. Next we prove some results on symmetries, also anologous to those of \cite{Alves2}, plus a result on preservation of block dimensions (Lemma \ref{cadeia}), and conclude that $Symm(V,d_{(P,\pi)})$ is the semi-direct product of the direct product of the symmetry groups inducedes for each chain and the automorphism group of the permutations of chains that preserves the block dimensions (Theorem \ref{Theorem RT}).

\section{Symmetries of a Linear Ordered Block Space}

Let $P=([n],\leq)$ be the linear order $1<2< \ldots <n$, let $\pi = (k_1,k_2, \ldots ,k_n)$ be a partition of $N$ and let $$V= V_1 \oplus V_2 \oplus \ldots \oplus V_n,$$ $V_i = \mathbb{F}_{q}^{k_i}$, $i=1,2,\ldots,n$, be the $\pi$-direct sum decomposition of the vector space $V = \mathbb{F}_{q}^{N}$ endow with the poset block metric $d_{(P,\pi)}$. In this section we will describe the
full symmetry group of the poset block space $(V,d_{(P,\pi)})$. This description will be used in
the next section to describe the symmetry group of the ordered Hamming
block space. In this section $P=([n],\leq)$ will be total order $1<2< \ldots <n$.

We note that, given $u=(u_{1},\ldots,u_{n})$ and $v=(v_{1},\ldots,v_{n})$ in
the total ordered block space $V$,
\[
d_{(P,\pi)}(u,v)=\max\{i:u_{i}\neq v_{i}\}.
\]
For each $i\in\{1,2,\ldots,n\}$, let $$F_{i}:V_i \oplus V_{i+1} \oplus \ldots \oplus V_n \rightarrow V_i$$ be a map that is a bijection with respect to the first block space $V_i$, that is, given $v_{i+1},\ldots,v_{n}\in V_{i+1} \oplus \ldots \oplus V_n$, the map $%
\widetilde{F}_{v_{i+1},\ldots,v_{n}}:V_i\rightarrow V_i
$ defined by
\[
\widetilde{F}_{v_{i+1},\ldots,v_{n}}\left( v_{i}\right) =F_{i}\left(
v_{i},v_{i+1},\ldots,v_{n}\right)
\]
is a bijection. Given such a family, we define a map $T_{(F_{1},F_{2},%
\ldots,F_{n})}:V\rightarrow V$ by
\[
T_{(F_{1},F_{2},\ldots,F_{n})}(v_{1},\ldots,v_{n}):=(F_{1}(v_{1},\ldots
,v_{n}),F_{2}(v_{2},\ldots,v_{n}),\ldots,F_{n}(v_{n})).
\]

\begin{lemma}
\label{lema1} Let $P=([n],\leq )$ be the linear order $1<2<\ldots <n$ and
let $V = V_1 \oplus V_2 \oplus \ldots \oplus V_n$ be the $\pi$-direct sum decomposition of $V=\mathbb{F}_{q}^N$ endowed with the poset block metric induced by the
poset $P$ and the partition $\pi$. The map $T_{(F_{1},F_{2},\ldots ,F_{n})}:V\longrightarrow V$ is a
symmetry of $V$.
\end{lemma}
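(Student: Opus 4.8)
The plan is to show that $T := T_{(F_{1},\ldots,F_{n})}$ is both a bijection and distance-preserving. For injectivity and surjectivity, I would argue from the top block downwards: if $T(u) = T(v)$, then looking at the last coordinate gives $F_n(u_n) = F_n(v_n)$, and since $F_n = \widetilde{F}_{\emptyset}$ is a bijection on $V_n$, we get $u_n = v_n$. Then the $(n-1)$-st coordinate gives $F_{n-1}(u_{n-1},u_n) = F_{n-1}(v_{n-1},v_n) = F_{n-1}(v_{n-1},u_n)$, and since $\widetilde{F}_{u_n}$ is a bijection on $V_{n-1}$, we get $u_{n-1} = v_{n-1}$; inductively $u_i = v_i$ for all $i$, so $T$ is injective. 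Surjectivity follows by the same descending recursion: given a target $(w_1,\ldots,w_n)$, choose $v_n$ with $F_n(v_n) = w_n$, then $v_{n-1}$ with $\widetilde{F}_{v_n}(v_{n-1}) = w_{n-1}$, and so on down to $v_1$; this is exactly the content of each $\widetilde{F}$ being a bijection. (Alternatively, since $V$ is finite, injectivity already implies bijectivity, but the explicit recursion is clean and reusable.)

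For the metric part, recall from the observation just before the lemma that in the totally ordered block space $d_{(P,\pi)}(u,v) = \max\{i : u_i \neq v_i\}$. So I must show that $\max\{i : F_i(u_i,\ldots,u_n) \neq F_i(v_i,\ldots,v_n)\} = \max\{i : u_i \neq v_i\}$. Let $m$ denote the right-hand side, i.e.\ $u_m \neq v_m$ and $u_j = v_j$ for all $j > m$. Then for every $j > m$ the arguments of $F_j$ agree entirely, so the $j$-th coordinates of $T(u)$ and $T(v)$ coincide; hence $d_{(P,\pi)}(T(u),T(v)) \leq m$. For the $m$-th coordinate, the ``tail'' arguments $u_{m+1},\ldots,u_n = v_{m+1},\ldots,v_n$ are equal, call this common tail $t$; then the $m$-th coordinates are $\widetilde{F}_t(u_m)$ and $\widetilde{F}_t(v_m)$, which differ because $\widetilde{F}_t$ is injective and $u_m \neq v_m$. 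Therefore $d_{(P,\pi)}(T(u),T(v)) = m = d_{(P,\pi)}(u,v)$.

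This handles the case $u \neq v$; when $u = v$ both distances are $0$, trivially. I do not anticipate a real obstacle here — the argument is essentially a bookkeeping induction exploiting the triangular structure of $T$ (the $i$-th output block depends only on input blocks $i, i+1, \ldots, n$) together with the fact that in a chain the block-metric distance is just the largest index where the two vectors disagree. The one point requiring a modicum of care is keeping the direction of the recursion straight: everything propagates from the top block $V_n$ downwards, since that is the block on which $F_n$ alone acts as an honest bijection, and the bijectivity of the lower $\widetilde{F}$'s is only available once the higher-indexed blocks have been pinned down.
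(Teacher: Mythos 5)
Your proof is correct and follows essentially the same route as the paper: the key step in both is that, with $l=\max\{i:u_i\neq v_i\}$, the triangular structure forces the $i$-th output blocks to agree for $i>l$ and to differ at $i=l$ by injectivity of $\widetilde{F}_{u_{l+1},\ldots,u_n}$. The paper dispenses with your explicit descending recursion for bijectivity by noting (as you do parenthetically) that a distance-preserving map on a finite metric space is automatically a bijection.
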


\begin{proof}
Given $u=(u_{1},\ldots ,u_{n}),v=(v_{1},\ldots ,v_{n})\in V$, let $%
l=d_{(P,\pi)}(u,v)=\max \{i:u_{i}\neq v_{i}\}$. Since each $F_{i}:V_i \oplus V_{i+1} \oplus \ldots \oplus V_n \rightarrow V_i$ is a bijection in relation to the
first coordinate, we have that
\[
F_{l}(u_{l},u_{l+1},\ldots ,u_{n})\neq F_{l}(v_{l},v_{l+1},\ldots ,v_{n})
\]
and
\[
F_{t}(u_{t},u_{t+1},\ldots ,u_{n})=F_{t}(v_{t},v_{t+1},\ldots ,v_{n})
\]
for any $t>l$. It follows that
\begin{align*}
d_{(P,\pi)}\left( T_{(F_{1},\ldots ,F_{n})}(u),T_{(F_{1},\ldots ,F_{n})}(v)\right)
& =\max \{i:F_{i}(u_{i},\ldots ,u_{n})\neq F_{i}(v_{i},\ldots ,v_{n})\} \\
& =l
\end{align*}
and hence $T_{(F_{1},F_{2},\ldots ,F_{n})}$ is distance preserving. Since $V$
is a finite metric space, it follows that $T_{(F_{1},F_{2},\ldots ,F_{n})}$
is also a bijection.
\end{proof}

In the previous lemma we attained a large set of symmetries of $V$. The
following lemma shows every symmetry may be expressed in this form.

\begin{lemma}
\label{lemasimetrias} Let $P=([n],\leq )$ be the linear order $1<2<\ldots <n$ and
let $V = V_1 \oplus V_2 \oplus \ldots \oplus V_n$ be the $\pi$-direct sum decomposition of $V=\mathbb{F}_{q}^N$ endowed with the poset block metric induced by the
poset $P$ and the partition $\pi$. Let $T:V\rightarrow V$ be a symmetry of $V$. Then, there are functions $F_{i}:V_i \oplus V_{i+1} \oplus \ldots \oplus V_n \rightarrow V_i$ such that:

\begin{enumerate}
\item[$\left( i\right) $]  $T\left( v_{1},v_{2},\ldots ,v_{n}\right) =\left(
F_{1}\left( v_{1},v_{2},\ldots ,v_{n}\right) ,F_{2}\left( v_{2},\ldots
,v_{n}\right) ,\ldots ,F_{n}\left( v_{n}\right) \right) $;

\item[$\left( ii\right) $]  For every $i\in \left\{ 1,\ldots ,n\right\} $
and each $\left( v_{i+1},\ldots ,v_{n}\right) \in V_{i+1} \oplus \ldots \oplus V_{n}$ the function $\widetilde{F}_{v_{i+1},\ldots ,v_{n}}:V_i \rightarrow V_i$ defined by $\widetilde{F}_{v_{i+1},\ldots
,v_{n}}\left( v_{i}\right) =F_{i}\left( v_{i},v_{i+1},\ldots ,v_{n}\right) $
is a bijection.
\end{enumerate}
\end{lemma}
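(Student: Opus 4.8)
The plan is to prove this by induction on $n$, peeling off the top coordinate first. The key observation is the metric formula $d_{(P,\pi)}(u,v)=\max\{i:u_i\neq v_i\}$, which says two vectors are at distance $\le l$ exactly when they agree in coordinates $l+1,\ldots,n$. This makes the sets $B_j:=\{v\in V:v_{j+1}=\cdots=v_n=0\}$ (equivalently, the metric balls of radius $j$ around $0$) the natural objects, and more generally the "cosets" determined by fixing the last $n-j$ coordinates form a partition of $V$ into blocks on which the metric is controlled.

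First I would show that $T$ respects the nested structure of these coordinate-fixing equivalence relations. Define, for each $j$, the equivalence relation $u\sim_j v \iff d_{(P,\pi)}(u,v)\le j \iff u_i=v_i$ for all $i>j$. Since $T$ is an isometry, it preserves each $\sim_j$: $u\sim_j v\iff T(u)\sim_j T(v)$. The relation $\sim_{n-1}$ has classes indexed by the value of the $n$-th coordinate, so $T$ induces a well-defined bijection on these classes, which in coordinate-free terms means: $v_n$ is determined by $T(v)_n$, i.e. there is a map $F_n:V_n\to V_n$ with $T(v)_n=F_n(v_n)$, and the induced map on $\sim_{n-1}$-classes forces $F_n$ to be a bijection. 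This establishes the last component of $(i)$ and the $i=n$ case of $(ii)$.

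Next, I would proceed down the chain. Having fixed $v_n$, restrict attention to a single $\sim_{n-1}$-class $C=\{v:v_n=c\}$; since $T$ maps $C$ bijectively onto $C'=\{w:w_n=F_n(c)\}$ and both are isometric copies of the total-ordered block space on $V_1\oplus\cdots\oplus V_{n-1}$ (the metric restricted to $C$ is $\max\{i\le n-1:u_i\neq v_i\}$), the restriction $T|_C$ is a symmetry of an $(n-1)$-block linear ordered space. By the inductive hypothesis applied to each such restriction — taking care that the functions produced may depend on the chosen class, i.e. on $v_n$ — we get functions, and reassembling over all classes yields $F_i:V_i\oplus\cdots\oplus V_n\to V_i$ for $i<n$ with the stated form and with $\widetilde F_{v_{i+1},\ldots,v_n}$ a bijection for each fixed tail. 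The base case $n=1$ is just the statement that any symmetry of $(V_1,d)$ is a bijection $F_1:V_1\to V_1$, which is immediate.

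The main obstacle is bookkeeping rather than conceptual: one must verify that the decomposition $T(v)_i=F_i(v_i,\ldots,v_n)$ genuinely has $F_i$ independent of $v_1,\ldots,v_{i-1}$, which follows because $T(v)_i$ depends only on the $\sim_{i-1}$-class of $v$ hence only on $(v_i,\ldots,v_n)$; and that the inductive application to the fibers $C$ is uniform enough to glue. A cleaner non-inductive alternative, which I would actually prefer to write, is to define $F_i$ directly: observe that $T(v)_i$ depends only on $(v_i,v_{i+1},\ldots,v_n)$ (since $u\sim_{i-1}v$ implies $T(u)\sim_{i-1}T(v)$, hence $T(u)_i=T(v)_i$), so set $F_i(v_i,\ldots,v_n):=T(v)_i$; this immediately gives $(i)$, and then $(ii)$ follows because if $\widetilde F_{v_{i+1},\ldots,v_n}(v_i)=\widetilde F_{v_{i+1},\ldots,v_n}(v_i')$ with $v_i\neq v_i'$, we could complete both to vectors agreeing in coordinates $>i$, getting two vectors at distance exactly $i$ mapped to vectors at distance $<i$, contradicting that $T$ is distance-preserving; surjectivity of each $\widetilde F$ then follows by a counting argument since $V_i$ is finite, or directly from surjectivity of $T$.
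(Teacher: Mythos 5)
Your proposal is correct, and the ``cleaner non-inductive alternative'' you say you would actually write is essentially the paper's own proof: define $F_i(v_i,\ldots,v_n):=T(v)_i$ after observing that changing $v_1,\ldots,v_{i-1}$ moves $v$ by distance at most $i-1$, hence cannot change $T(v)_i$; then deduce injectivity of each $\widetilde F_{v_{i+1},\ldots,v_n}$ from the fact that two vectors differing only in block $i$ are at distance exactly $i$ while their images would be at distance less than $i$, and conclude bijectivity by finiteness. The inductive framing you sketch first also works but adds only bookkeeping.
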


\begin{proof}
Let us write
\[
T\left( v_{1},v_{2},\ldots ,v_{n}\right) =\left( T_{1}\left(
v_{1},v_{2},\ldots ,v_{n}\right) ,\ldots ,T_{n}\left( v_{1},v_{2},\ldots
,v_{n}\right) \right) \text{.}
\]
We prove first that $T_{j}\left( v_{1},v_{2},\ldots ,v_{n}\right)
=F_{j}\left( v_{j},v_{j+1},\ldots ,v_{n}\right) $, that is,
$T_{j}$ does not depends on the first $j-1$ coordinates. In other
words, we want to prove that
\[
T_{j}\left( v_{1},\ldots ,v_{j-1},v_{j},\ldots ,v_{n}\right) =T_{j}\left(
u_{1},\ldots ,u_{j-1},v_{j},\ldots ,v_{n}\right)
\]
regardless of the values of the first $j-1$ coordinates. But
\begin{align*}
d_{(P,\pi)}\left( \left( u_{1},\ldots ,u_{j-1},v_{j},\ldots ,v_{n}\right) ,\left(
v_{1},\ldots ,v_{j-1},v_{j},\ldots ,v_{n}\right) \right) & =\max_{i}\left\{
i:v_{i}\neq u_{i}\right\} \\ & \leq j-1
\end{align*}
and since $T$ is a symmetry, we find that
\[
\begin{array}{c}
d_{(P,\pi)}\left( T\left( v_{1},\ldots ,v_{j-1},v_{j},\ldots ,v_{n}\right) ,T\left(
u_{1},\ldots ,u_{j-1},v_{j},\ldots ,v_{n}\right) \right) = \\
=d_{(P,\pi)}\left( \left( u_{1},\ldots ,u_{j-1},v_{j},\ldots ,v_{n}\right) ,\left(
v_{1},\ldots ,v_{j-1},v_{j},\ldots ,v_{n}\right) \right) \leq j-1
\end{array}
\]
and so
\[
T_{j}\left( v_{1},\ldots ,v_{j-1},v_{j},\ldots ,v_{n}\right) =T_{j}\left(
u_{1},\ldots ,u_{j-1},v_{j},\ldots ,v_{n}\right)
\]
for any $(v_{1},\ldots ,v_{j-1}),(u_{1},\ldots ,u_{j-1})\in V_1 \oplus \ldots \oplus V_{j-1}$ and $(v_{j},\ldots ,v_{n}) \in V_j \oplus \ldots \oplus V_n$. We find that
\[
T\left( v_{1},v_{2},\ldots ,v_{n}\right) =\left( F_{1}\left(
v_{1},v_{2},\ldots ,v_{n}\right) ,F_{2}\left( v_{2},\ldots ,v_{n}\right)
,\ldots ,F_{n}\left( v_{n}\right) \right)
\]
and the first statement is proved.

Now we need to prove that each $\widetilde{F}_{v_{i+1},\ldots ,v_{n}}$ is a
bijection, what is equivalent to prove those maps are injective. Suppose $%
\widetilde{F}_{v_{i+1},\ldots ,v_{n}}$ is not injective, so there are $%
v_{i}\neq u_{i\text{ }}$ in $V_i$ such that
\[
\widetilde{F}_{v_{i+1},\ldots ,v_{n}}\left( v_{i}\right) =\widetilde{F}%
_{v_{i+1},\ldots ,v_{n}}\left( u_{i}\right) .
\]
Considering $i$ minimal with this property, we would have
\begin{align*}
i\ \ =& \ \ d_{(P,\pi)}\left( \left( v_{1},\ldots ,v_{i},\ldots ,v_{n}\right)
,\left( v_{1},\ldots ,u_{i},\ldots ,v_{n}\right) \right)  \\
=& \ \ d_{(P,\pi)}\left( T\left( v_{1},\ldots ,v_{i},\ldots ,v_{n}\right) ,T\left(
v_{1},\ldots ,u_{i},\ldots ,v_{n}\right) \right)  \\
<& \ \ i
\end{align*}
contradicting the assumption that $T$ is a symmetry of $V$.
\end{proof}

The next theorem follows straightforward from the previous lemmas.

\begin{theorem}
\label{prop3}Let $P=([n],\leq )$ be the linear order $1<2<\ldots <n$ and
let $V = V_1 \oplus V_2 \oplus \ldots \oplus V_n$ be the $\pi$-direct sum decomposition of $V=\mathbb{F}_{q}^N$ endowed with the poset block metric induced by the
poset $P$ and the partition $\pi$. Then, the group $\textrm{Symm}(V,d_{(P,\pi)})$ of symmetries of $V$ is the set of all
maps $T_{(F_{1},F_{2},\ldots ,F_{n})}:V\rightarrow V$.
\end{theorem}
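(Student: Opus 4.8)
The plan is to derive Theorem \ref{prop3} as an immediate consequence of Lemmas \ref{lema1} and \ref{lemasimetrias}, by showing the two inclusions between the group $\textrm{Symm}(V,d_{(P,\pi)})$ and the set $\mathcal{T}:=\{T_{(F_{1},\ldots,F_{n})}\}$ of all maps built from admissible families $(F_{1},\ldots,F_{n})$ of the type described before Lemma \ref{lema1}. First I would record the easy inclusion $\mathcal{T}\subseteq \textrm{Symm}(V,d_{(P,\pi)})$: this is exactly Lemma \ref{lema1}, which tells us every $T_{(F_{1},\ldots,F_{n})}$ is a distance-preserving bijection, hence a symmetry. So nothing new is needed here.

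For the reverse inclusion $\textrm{Symm}(V,d_{(P,\pi)})\subseteq \mathcal{T}$, I would take an arbitrary symmetry $T:V\to V$ and apply Lemma \ref{lemasimetrias}. Part $(i)$ of that lemma produces functions $F_{i}:V_i\oplus V_{i+1}\oplus\cdots\oplus V_n\to V_i$ with
\[
T(v_{1},\ldots,v_{n})=\bigl(F_{1}(v_{1},\ldots,v_{n}),F_{2}(v_{2},\ldots,v_{n}),\ldots,F_{n}(v_{n})\bigr),
\]
and part $(ii)$ guarantees that for each $i$ and each fixed tail $(v_{i+1},\ldots,v_{n})$ the slice map $\widetilde{F}_{v_{i+1},\ldots,v_{n}}$ is a bijection on $V_i$. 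But these are precisely the two conditions required of the family $(F_{1},\ldots,F_{n})$ in the construction preceding Lemma \ref{lema1}, so $T=T_{(F_{1},\ldots,F_{n})}\in\mathcal{T}$. Combining the two inclusions gives the set-theoretic equality $\textrm{Symm}(V,d_{(P,\pi)})=\mathcal{T}$.

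There is essentially no obstacle here — the theorem is a packaging statement, and both lemmas have already done the real work (the ``$\max$'' formula for $d_{(P,\pi)}$ on a chain, the coordinate-by-coordinate independence argument using distance $\leq j-1$, and the minimality argument for injectivity of the slices). The only point worth a remark is that the statement asserts an equality of \emph{sets}, not that the parametrization $(F_{1},\ldots,F_{n})\mapsto T_{(F_{1},\ldots,F_{n})}$ is injective; I would not claim uniqueness of the family, since it is not needed and in any case the correspondence is clearly canonical once the $\pi$-direct sum decomposition is fixed. If one wanted, one could add a sentence noting that the group structure of $\textrm{Symm}(V,d_{(P,\pi)})$ is thereby transported to $\mathcal{T}$, with composition of the $T_{(F_{1},\ldots,F_{n})}$'s corresponding to a suitable "triangular" composition of the families, but that is a digression beyond what the theorem states.
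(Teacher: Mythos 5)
Your proposal is correct and follows exactly the paper's route: the paper states that the theorem ``follows straightforward from the previous lemmas,'' i.e.\ Lemma \ref{lema1} gives the inclusion of the maps $T_{(F_{1},\ldots ,F_{n})}$ into $\textrm{Symm}(V,d_{(P,\pi)})$ and Lemma \ref{lemasimetrias} gives the reverse inclusion, which is precisely your two-inclusion argument.
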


We recall that in Lemma \ref{lemasimetrias}, we have that $\widetilde
{F}_{v_{2},\ldots,v_{n}}\left( v_{1}\right) =F_{1}\left(
v_{1},v_{2},\ldots,v_{n}\right) $ is a bijection, hence a permutation of $V_1$ for each $\left( v_{2},\ldots,v_{n}\right) \in V_2 \oplus \ldots \oplus V_n$. If $S_{m}$ denotes the symmetric group of permutations of a set
with $m$ elements, since $V = \mathbb{F}_{q}^{N}$ has $q^{N}$ elements, if $V=V_1 \oplus V_2 \oplus \ldots \oplus V_n$ is the $\pi$-direct sum decomposition of $V$ with $\pi = (k_1,k_2,\ldots,k_n)$,  we can
identify the group of functions $F:V_1 \oplus V_2 \oplus \ldots \oplus V_n \rightarrow V_1$ such that $\widetilde{F}_{v_{2},\ldots,v_{n}}$ is a permutation of $V_1 = \mathbb{F}_q^{k_1}$ with the direct product $\left( S_{q^{k_1}}\right) ^{q^{N-k_1}}$. With
this notations we have the following result:

\begin{corollary}
\label{corolario} Let $P=([n],\leq )$ be the linear order $1<2<\ldots <n$ and
let $V = V_1 \oplus V_2 \oplus \ldots \oplus V_n$ be the $\pi$-direct sum decomposition of $V=\mathbb{F}_{q}^N$ endowed with the poset block metric induced by the
poset $P$ and the partition $\pi$. If $\pi = (k_1,k_2,\ldots,k_n)$, then the group of simmetries $\textit{Symm}(V,d_{(P,\pi)})$ has a semi-direct product structure $$
(S_{q^{k_1}})^{q^{N-k_1}}\mathbb{o} \left( \ldots \left((S_{q^{k_{n-1}}})^{q^{N-k_1-k_2- \ldots - k_{n-1}}} \mathbb{o} (S_{q^{k_n}})^{q^{N-k_1-k_2- \ldots -k_{n-1} - k_n}} \right) \ldots \right).$$
\end{corollary}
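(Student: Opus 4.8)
The plan is to derive Corollary~\ref{corolario} as a direct consequence of Theorem~\ref{prop3} by recognizing that the parametrization $(F_1,F_2,\ldots,F_n)\mapsto T_{(F_1,\ldots,F_n)}$ exhibits $\textrm{Symm}(V,d_{(P,\pi)})$ as an iterated (semi-direct) extension, one factor for each block in the chain. First I would set up notation: for $1\le i\le n$ write $W_i:=V_i\oplus V_{i+1}\oplus\cdots\oplus V_n$ and let $\mathcal{G}_i$ denote the set of maps $F:W_i\to V_i$ that are bijective in the first coordinate in the sense of Lemma~\ref{lemasimetrias}(ii); as observed in the paragraph preceding the corollary, fixing the ``tail'' $(v_{i+1},\ldots,v_n)\in W_{i+1}$ determines a permutation of $V_i=\mathbb{F}_q^{k_i}$, and since $|W_{i+1}|=q^{N-k_1-\cdots-k_i}$ we may identify $\mathcal{G}_i$ with the direct product $\bigl(S_{q^{k_i}}\bigr)^{q^{N-k_1-\cdots-k_i}}$ (a product of copies of $S_{q^{k_i}}$ indexed by the tails). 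By Theorem~\ref{prop3}, the map $(F_1,\ldots,F_n)\mapsto T_{(F_1,\ldots,F_n)}$ is a bijection from $\mathcal{G}_1\times\cdots\times\mathcal{G}_n$ onto $\textrm{Symm}(V,d_{(P,\pi)})$.

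Next I would make the group structure explicit by composing two symmetries $T_{(F_1,\ldots,F_n)}$ and $T_{(G_1,\ldots,G_n)}$ and reading off the components of the composite. Because each $F_j$ depends only on coordinates $j,j+1,\ldots,n$ and produces the $j$-th coordinate, the $j$-th component of the composition is itself a map in $\mathcal{G}_j$ whose value is determined by $F_j$, by $G_j$, and by $G_{j+1},\ldots,G_n$ (the latter enter because the arguments fed into $F_j$ are the outputs $G_j(v_j,\ldots),\ldots,G_n(v_n)$). The crucial structural observation is the triangular nature of this dependence: the first component is affected by \emph{all} of $G_1,\ldots,G_n$, but $G_1$ never influences components $2,\ldots,n$. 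Concretely, the subset of symmetries with $F_2=\cdots=F_n=\mathrm{id}$ is a normal subgroup isomorphic to $\mathcal{G}_1\cong(S_{q^{k_1}})^{q^{N-k_1}}$ (normality: conjugating such an element by an arbitrary $T_{(G_1,\ldots,G_n)}$ only relabels the $V_1$-fibers and leaves components $2,\ldots,n$ untouched), and the quotient is carried isomorphically by the symmetries with $F_1=\mathrm{id}$, which form the symmetry group of the truncated chain $V_2\oplus\cdots\oplus V_n$. This gives the splitting
\[
\textrm{Symm}(V,d_{(P,\pi)})\;\cong\;(S_{q^{k_1}})^{q^{N-k_1}}\rtimes\textrm{Symm}(V_2\oplus\cdots\oplus V_n,d_{(P',\pi')}),
\]
where $P'$ is the chain on $\{2,\ldots,n\}$ and $\pi'=(k_2,\ldots,k_n)$, and the semidirect action is the one just described (relabeling fibers).

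Then I would close the argument by induction on $n$ (or simply iterate the display above), peeling off one block at a time, the base case $n=1$ being $\textrm{Symm}(V_n,d)=S_{q^{k_n}}=(S_{q^{k_n}})^{q^{N-k_1-\cdots-k_n}}$ since the exponent $N-k_1-\cdots-k_n=0$. Unwinding the recursion produces exactly the right-nested expression in the statement,
\[
(S_{q^{k_1}})^{q^{N-k_1}}\rtimes\Bigl(\cdots\bigl((S_{q^{k_{n-1}}})^{q^{N-k_1-\cdots-k_{n-1}}}\rtimes(S_{q^{k_n}})^{q^{N-k_1-\cdots-k_n}}\bigr)\cdots\Bigr),
\]
matching the formula (with $\mathbb{o}$ denoting the semidirect product as in the paper).

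I expect the main obstacle to be bookkeeping rather than conceptual difficulty: one must verify carefully that the composition of two maps of the form $T_{(F_1,\ldots,F_n)}$ again has this form with the $j$-th ``tail-bijectivity'' preserved (this is essentially immediate from Lemma~\ref{lema1} applied to the composite, since composition of symmetries is a symmetry and Theorem~\ref{prop3} then forces the form), and that the designated copy of $(S_{q^{k_1}})^{q^{N-k_1}}$ is genuinely \emph{normal} — i.e., that conjugation by a general symmetry sends a ``first-coordinate-only'' symmetry to another one. The latter is where the triangular dependence must be invoked precisely: since components $2,\ldots,n$ of a first-coordinate-only symmetry are the identity, and conjugation cannot create dependence of components $2,\ldots,n$ on new data, the conjugate still acts as the identity on $V_2\oplus\cdots\oplus V_n$-coordinates and hence lies in the same subgroup. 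Everything else is a routine, if slightly tedious, unraveling of the coordinatewise definitions.
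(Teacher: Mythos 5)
Your proposal is correct and follows essentially the same route as the paper: both identify the normal subgroup $H$ of symmetries with $F_2=\cdots=F_n=\mathrm{id}$, the complementary subgroup $K$ of symmetries with $F_1=\mathrm{id}$ (isomorphic to the symmetry group of the truncated chain $V_2\oplus\cdots\oplus V_n$), verify $HK=\textrm{Symm}(V,d_{(P,\pi)})$, $H\cap K=\{\mathrm{id}\}$, and normality of $H$ by the same triangular-dependence/conjugation computation, and then iterate. The only difference is presentational: the paper writes the $H$-factor explicitly as tuples $(\pi_X)_{X\in\mathbb{F}_q^{N-k_1}}$ and computes the conjugate as $(\pi_{T^{-1}(X)})$, whereas you phrase the same computation as relabeling of $V_1$-fibers.
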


\begin{proof}
Let $G_{(\widehat{P},\widehat{\pi})}$ be the symmetry group $\textit{Symm}(\widehat{V},d_{(\widehat{P},\widehat{\pi})})$ of $$\widehat{V} = \widehat{V}_1 \oplus \widehat{V}_2 \oplus \ldots \oplus \widehat{V}_{n-1}$$ where $\widehat{V}_i=V_{i+1}$ for each $i=1,2,\ldots,n-1$, $\widehat{P}=([n-1],\leq )$ is the linear order $1<2<\ldots<n-1$ and $\widehat{\pi}=(k_2,k_3,\ldots,k_n)$. We claim that $$\textit{Symm}(V,d_{(P,\pi)}) = (S_{q^{k_1}})^{q^{N-k_1}}\mathbb{o} G_{(\widehat{P},\widehat{\pi})}.$$ In order to simplify notation, we will denote the elements of $\left(
S_{q^{k_1}}\right)^{q^{N-k_1}}$ by $(\pi_X)$: $$(\pi_X) := (\pi_X)_{X \in \mathbb{F}_q^{N-k_1}}.$$

The group $G_{(\widehat{P},\widehat{\pi})}$ acts on $V = V_1 \oplus \widehat{V}$ by
\[
T\left( x_{1},\ldots,x_{n}\right) =\left( x_{1},T\left(
x_{2},\ldots,x_{n}\right) \right)
\]
and $(S_{q^{k_1}})^{q^{N-k_1}}$ acts by
\[
(\pi_{X})\left( x_{1},\ldots,x_{n}\right)
=\left( \pi_{\left( x_{2},\ldots,x_{n}\right) }\left( x_{1}\right)
,x_{2},\ldots,x_{n}\right).
\]

Both groups act as groups of symmetries and both act faithfully. Therefore
these actions establish isomorphisms of these groups with subgroups of $\textit{Symm}(V,d_{(P,\pi)})$, and we identify $(S_{q^{k_1}})^{q^{N-k_1}}$ and $G_{(\widehat{P},\widehat{\pi})}$ with
their counterparts $H \cong (S_{q^{k_1}})^{q^{N-k_1}}$ and $K\cong G_{(\widehat{P},\widehat{\pi})}$
in $\textit{Symm}(V,d_{(P,\pi)})$. From the actions defined above, it is easy to see that
\[
H=\{T\in G_{n+1};T=\left( F_{1},id_{V_2},\ldots,id_{V_n}\right) \}
\]
and
\[
K=\{T\in G_{n+1};T=\left( id_{V_1},F_{2},F_{3},\ldots,F_{n+1}\right) \}
\]
where each $F_{i}$ satisfies the conditions of Lemma \ref{lema1} and $id_{V_i}$ is the identity over the vector subspace $V_i$. Clearly, $%
\textit{Symm}(V,d_{(P,\pi)})=HK$, because each symmetry of $V$ is a
composition $T_{1}\circ T_{2}$ with $T_{1}\in H$ and $T_{2}\in K$. We claim
that $\textit{Symm}(V,d_{(P,\pi)})$ is a semi-direct product of $H$ by $K$.

Let $L\in H\cap K$. Since $L\in H$, $L(x_{1},x_{2},\ldots,x_{n+1})=(x_{1}^{%
\prime},x_{2},\ldots,x_{n+1})$ and, since $L$ is also in $K$, $%
x_{1}^{\prime}=x_{1}$. Hence $L=id_V$ and the groups $H$ and $K$ intersect
trivially.

We prove now that $H$ is a normal subgroup of $\textit{Symm}(V,d_{(P,\pi)})$. In fact, since $%
\textit{Symm}(V,d_{(P,\pi)})=HK$, it suffices to check that $THT^{-1}\subset H$ for each $T\in K$%
. Let $\left( \pi_{X}\right) \in H$ and $%
T\in K$. If $\left( x_{1},\ldots,x_{n}\right) \in V$,
then
\begin{align*}
\left( T\circ(\pi_{X})\circ T^{-1}\right)\left( x_{1},\ldots,x_{n+1}\right) 
& =  \left( T\circ(\pi_{X})\right) \left(x_{1},T^{-1}\left( x_{2}, \ldots,x_{n+1}\right) \right) \\
& = T\left( \pi_{T^{-1}\left( x_{2},\ldots,x_{n+1}\right) }\left(x_{1}\right) ,T^{-1}\left( x_{2},\ldots,x_{n+1}\right) \right) \\
& = \left( \pi_{T^{-1}\left( x_{2},\ldots,x_{n+1}\right) }\left(
x_{1}\right) ,T\circ T^{-1}\left( x_{2},\ldots,x_{n+1}\right) \right) \\
& = \left( \pi_{T^{-1}\left( x_{2},\ldots,x_{n+1}\right) }\left(
x_{1}\right) ,x_{2},\ldots,x_{n+1}\right) \\
& = (\pi_{T^{-1}(X)})\left(
x_{1},x_{2},\ldots,x_{n+1}\right).
\end{align*}

This shows that $H$ is a normal subgroup of $\textit{Symm}(V,d_{(P,\pi)})$ and that $\textit{Symm}(V,d_{(P,\pi)}) =
H\rtimes K$. Using the aforementioned isomorphisms involving $H$ and $K$ we
conclude that $\textit{Symm}(V,d_{(P,\pi)}) \cong (S_{q^{k_1}})^{q^{N-k_1}}\rtimes G_{(\widehat{P},\widehat{\pi})}$.
\end{proof}

\begin{corollary}
Let $P=([n],\leq )$ be the linear order $1<2<\ldots <n$ and
let $V = V_1 \oplus V_2 \oplus \ldots \oplus V_n$ be the $\pi$-direct sum decomposition of $V=\mathbb{F}_{q}^N$ endowed with the poset block metric induced by the
poset $P$ and the partition $\pi = (k_1,k_2,\ldots,k_n)$. Then $$\vert \textit{Symm}(V,d_{(P,\pi)}) \vert = (q^{k_1}!)^{q^{N-k_1}} \cdot (q^{k_2}!)^{q^{N-k_1-k_2}} \cdot \ldots \cdot (q^{k_{n-1}}!)^{q} \cdot (q^{k_{n}}!).$$
\end{corollary}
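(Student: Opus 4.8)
The plan is to prove the formula by induction on the number $n$ of blocks, using as the engine the semi-direct product decomposition established in Corollary \ref{corolario}. The underlying elementary fact is that the cardinality of a semi-direct product $A \rtimes B$ equals $|A| \cdot |B|$, so the iterated semi-direct product structure of $\textit{Symm}(V,d_{(P,\pi)})$ immediately turns into a product of the cardinalities of its factors; it then only remains to identify those cardinalities and to keep track of the exponents.

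For the base case $n = 1$ one has $V = V_1 = \mathbb{F}_q^{k_1}$ (so $N = k_1$), and by Theorem \ref{prop3} the symmetry group consists precisely of all maps $T_{(F_1)}$ with $F_1 \colon V_1 \to V_1$ a bijection; hence it is the full symmetric group on the $q^{k_1}$ elements of $V_1$ and has order $q^{k_1}!$, which matches the right-hand side of the formula, the product collapsing to a single factor whose exponent is $q^{N - k_1} = q^0 = 1$.

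For the inductive step, I would set $\widehat{N} = N - k_1$, $\widehat{\pi} = (k_2, \ldots, k_n)$, and let $G_{(\widehat{P},\widehat{\pi})}$ be the symmetry group of the linear ordered block space $\widehat{V} = V_2 \oplus \cdots \oplus V_n = \mathbb{F}_q^{\widehat{N}}$ on $n-1$ blocks, exactly as in the proof of Corollary \ref{corolario}. That corollary gives $\textit{Symm}(V,d_{(P,\pi)}) \cong (S_{q^{k_1}})^{q^{N - k_1}} \rtimes G_{(\widehat{P},\widehat{\pi})}$, and since $S_{q^{k_1}}$ has $q^{k_1}!$ elements we get $|(S_{q^{k_1}})^{q^{N-k_1}}| = (q^{k_1}!)^{q^{N-k_1}}$; multiplying by $|G_{(\widehat{P},\widehat{\pi})}|$, which the induction hypothesis evaluates as the analogous product for $\widehat{V}$, and substituting $\widehat{N} = N - k_1$ so that each exponent $\widehat{N} - k_2 - \cdots - k_j$ rewrites as $N - k_1 - k_2 - \cdots - k_j$ (the last one being $N - (k_1 + \cdots + k_n) = 0$), yields the claimed identity.

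There is no serious obstacle here: the result is a routine consequence of Corollary \ref{corolario} together with the multiplicativity of cardinality under semi-direct products. The only point requiring a little care is the reindexing of the exponents when passing from $V$ to $\widehat{V}$, and checking that the telescoping of the partial sums $k_1 + k_2 + \cdots + k_j$ produces exactly the exponents $q^{N - k_1 - \cdots - k_j}$ appearing in the statement, with the final exponent equal to $1$.
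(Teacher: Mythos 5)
Your proof is correct and is exactly the argument the paper leaves implicit: iterate the semi-direct product decomposition of Corollary \ref{corolario} and use the fact that $|A \rtimes B| = |A|\cdot|B|$, with the base case supplied by Theorem \ref{prop3}. Note only that your (correct) derivation gives the exponent of $(q^{k_{n-1}}!)$ as $q^{N-k_1-\cdots-k_{n-1}} = q^{k_n}$, so the factor $(q^{k_{n-1}}!)^{q}$ printed in the statement is accurate only when $k_n=1$ and should read $(q^{k_{n-1}}!)^{q^{k_n}}$ in general.
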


Now, if the partition $\pi = (1,1,\ldots,1)$ we have that (see \cite{Alves2}, Corollary 3.1):

\begin{corollary}
Let $P=([n],\leq )$ be the linear order $1<2<\ldots <n$ and let $V=\mathbb{F}_{q}^n$ be the vector space endowed with the poset metric induced by the
poset $P$ (with de block structure trivial). Then the group of simmetries $\textit{Symm}(V,d_{P})$ is isomorphic to the semi-direct product $$
(S_{q})^{q^{n-1}} \mathbb{o} ( \ldots ((S_{q})^{q} \mathbb{o} S_{q}) \ldots ).$$ In particular, $$\vert \textit{Symm}(V,d_{P}) \vert = (q!)^{\frac{q^n-1}{q-1}+1}.$$
\end{corollary}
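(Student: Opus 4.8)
The plan is to obtain this as the trivial-block specialization of Corollary~\ref{corolario} and of the subsequent corollary that computes $|\mathrm{Symm}(V,d_{(P,\pi)})|$; note that when $\pi=(1,1,\dots,1)$ the poset block metric $d_{(P,\pi)}$ is, by definition, the poset metric $d_P$. Setting $k_1=k_2=\cdots=k_n=1$ one has $N=n$, each block $V_i$ is the line $\mathbb{F}_q$, and $S_{q^{k_i}}=S_q$; moreover the exponents $q^{\,N-k_1-\cdots-k_i}$ occurring in Corollary~\ref{corolario} reduce to $q^{\,n-i}$, equal to $q^{n-1}$ for $i=1$, to $q$ for $i=n-1$, and to $q^{0}=1$ for $i=n$. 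Hence the iterated semi-direct product of Corollary~\ref{corolario} collapses to
\[
(S_q)^{q^{n-1}}\rtimes\bigl(\cdots\bigl((S_q)^{q}\rtimes S_q\bigr)\cdots\bigr),
\]
which is the asserted structure.

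For the order of the group I would substitute $k_i=1$ and $N=n$ into the product $\prod_{i=1}^{n}(q^{k_i}!)^{\,q^{\,N-k_1-\cdots-k_i}}$ provided by the cardinality corollary; every factor then becomes $(q!)^{\,q^{\,n-i}}$, so that
\[
|\mathrm{Symm}(V,d_P)|=(q!)^{\,\sum_{i=1}^{n}q^{\,n-i}}=(q!)^{\,1+q+q^{2}+\cdots+q^{n-1}},
\]
and it remains only to evaluate the finite geometric progression $1+q+\cdots+q^{n-1}$ in the exponent. Alternatively this count can be carried out directly from Theorem~\ref{prop3}: a symmetry is exactly a map $T_{(F_1,\dots,F_n)}$, and prescribing $F_i$ amounts to choosing, independently for each of the $q^{\,n-i}$ values of $(v_{i+1},\dots,v_n)$, a permutation of $\mathbb{F}_q$; this gives $(q!)^{q^{\,n-i}}$ possibilities for $F_i$ and the same product overall.

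I do not expect any genuine obstacle: both assertions are formal consequences of the results already established for arbitrary partitions $\pi$, and the only step beyond bookkeeping is the summation of the geometric series $\sum_{j=0}^{n-1}q^{j}$ in the exponent.
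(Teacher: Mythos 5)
Your route is exactly the paper's: the corollary is stated there with no proof beyond the remark that it is the specialization of Corollary~\ref{corolario} and the cardinality corollary to the trivial partition $\pi=(1,1,\ldots,1)$, and your substitution $k_i=1$, $N=n$ reproduces both the iterated semi-direct product $(S_q)^{q^{n-1}}\rtimes(\cdots((S_q)^q\rtimes S_q)\cdots)$ and the count $\prod_{i=1}^{n}(q!)^{q^{n-i}}$. The alternative direct count from Theorem~\ref{prop3} that you sketch is also sound.

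There is, however, one point you pass over too quickly. Your computation gives
\[
|\mathrm{Symm}(V,d_P)|=(q!)^{\,1+q+\cdots+q^{n-1}}=(q!)^{\frac{q^{n}-1}{q-1}},
\]
whereas the statement you are asked to prove asserts the exponent $\frac{q^{n}-1}{q-1}+1$. ``Evaluating the geometric progression'' does not close this gap: $\sum_{j=0}^{n-1}q^{j}=\frac{q^{n}-1}{q-1}$, which differs from the claimed exponent by $1$. The discrepancy is not in your argument but in the printed statement: the order of the displayed semi-direct product is $(q!)^{q^{n-1}}\cdots(q!)^{q}\cdot(q!)=(q!)^{\frac{q^{n}-1}{q-1}}$, so the two halves of the corollary as printed are mutually inconsistent, and the case $n=1$ (where the symmetry group is $S_q$, of order $q!$, not $(q!)^2$) confirms that the ``$+1$'' is a typographical error. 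You should state explicitly that your derivation yields $(q!)^{\frac{q^{n}-1}{q-1}}$ and that the exponent in the corollary (and the analogous ``$+m$'' in the later ordered Hamming corollary) needs correcting, rather than asserting that the series evaluates to the stated expression.
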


\section{Symmetries of Ordered Hamming Block Spa\-ces}

In this section we consider an order $P=\left( [m\cdot n],\leq\right) $ that
is the union of $m$ disjoint chains $P_{1},P_{2},\ldots,P_{m}$ of order $n$.
We identify the elements of $[m\cdot n]$ with the set of ordered pairs of
integers $(i,j)$, with $1\leq i\leq m,\ 1\leq j\leq n$, where $(i,j)\leq(k,l)
$ \emph{iff} $i=k$ and $j\leq_{\mathbb{N}}l$, where $\leq_{\mathbb{N}}$ is
just the usual order on $\mathbb{N}$. We denote $P_{i}=\{(i,j):1\leq j\leq
n\}$. Each $P_{i}$ is a chain and those are the connected components of $%
\left( [m\cdot n],\leq\right) $.

Let $\pi = (k_{11},\ldots,k_{1n},\ldots,k_{m1},\ldots,k_{mn})$ be a partition of $N$ and for each $1 \leq i \leq m$ let $\pi_i = (k_{i1},\ldots,k_{in})$. Given a finite field $\mathbb{F}_{q}$ and $V = U_1 \oplus U_2 \oplus \ldots \oplus U_m$, where $U_i:=V_{i1} \oplus V_{i2} \oplus \ldots \oplus V_{in}$ and $\dim(V_{ij})=k_{ij}$ for all $1\leq i\leq m,\ 1\leq j\leq n$, we identify $V$ with the set of matrices $$\left\{ \left( \begin{array}{ccc}
v_{11} & \ldots & v_{1n} \\ 
\vdots & \ddots & \vdots \\ 
v_{m1} & \ldots & v_{mn}
\end{array} \right):v_{ij} \in V_i,1\leq i\leq m,\ 1\leq j\leq n \right\}.$$
The space $V$ with the poset metric induced by
the order $P=\left( [m\cdot n],\leq\right)$ is called the \emph{$(m,n,\pi)$-ordered Hamming block space}. Note that if $n=1$, then $%
P=\left( [m\cdot1],\leq\right) $ induces just the error-block metric on $V$, and in particular, if $\pi = (1,1,\ldots, 1)$, then $%
P=\left( [m\cdot1],\leq\right) $ induces just the Hamming metric on $%
\mathbb{F}_{q}^{m}$. Hence the induced metric from the poset $P=\left(
[m\cdot n],\leq\right) $ can be viewed as a generalization of the error-block metric.

Let $V=U_{1}\oplus U_{2}\oplus\ldots\oplus U_{m}$ as above, called the
\emph{canonical decomposition of} $V$. Given the canonical decompositions $%
u=u_{1}+\ldots+u_{m}$ and $v=v_{1}+\ldots+v_{m}$ with $u_{i},v_{i}\in U_{i}$%
, it is well known that
\[
d_{(P,\pi)}(u,v)=\sum_{i=1}^{m}d_{(P_{i},\pi_{i})}(u_{i},v_{i})
\]
where $d_{(P_{i},\pi_{i})}$, the restriction of $d_{(P,\pi)}$ to $U_{i}$, is a linear poset block
metric. We note that the restriction of $d_{(P,\pi)}$ to each $U_{i}$ turns it
into a poset space defined by a linear order, that is, each $U_{i}$ is symmetric to $(U_i,d_{([n],\pi_i)})$ with the metric $d_{([n],\pi_i)}$
determined by the chain $1<2<\ldots<n$. Let $G_{i,\pi_{i},n}$ be the group of
symmetries of $(U_i,d_{([n],\pi_i)})$. The direct product $\prod_{i=1}^m G_{i,\pi_{i},n}$ acts on $V$ in the following manner: given $%
T=(T_{1},\ldots,T_{m}) \in \prod_{i=1}^m G_{i,\pi_{i},n}$ and $v\in V$,
\[
T\left( v\right) :=\sum_{i=1}^{m}T_{i}\left( v_{i}\right).
\]

\begin{lemma}
\label{m} Let $(V,d_{(P,\pi)})$ be the $(m,n,\pi)$-ordered Hamming block space over $%
\mathbb{F}_{q}$ and let $G_{i,\pi_{i},n}$ be the group of symmetries of $\left( U_i,d_{([n],\pi_i)}\right) $. Given $T_{i}\in G_{i,\pi_{i},n}$, with $1\leq
i\leq m$, the map $T=(T_{1},\ldots,T_{m})$ defined by
\[
T(v)=\sum_{i=1}^{m}T_{i}(v_{i})\label{3}
\]
is a symmetry of $(V,d_{(P,\pi)})$.
\end{lemma}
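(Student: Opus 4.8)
The plan is to verify directly that the map $T=(T_1,\ldots,T_m)$ preserves the metric $d_{(P,\pi)}$, and then deduce bijectivity from finiteness of $V$, exactly as was done in Lemma \ref{lema1}. The only structural fact needed is the additivity of the metric over the connected components of the poset, namely $d_{(P,\pi)}(u,v)=\sum_{i=1}^m d_{(P_i,\pi_i)}(u_i,v_i)$, which is recalled just before the statement.

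First I would take two vectors $u,v\in V$ with canonical decompositions $u=u_1+\cdots+u_m$ and $v=v_1+\cdots+v_m$, $u_i,v_i\in U_i$. By the definition of the action, $T(u)=\sum_{i=1}^m T_i(u_i)$ and $T(v)=\sum_{i=1}^m T_i(v_i)$, and since $T_i(u_i),T_i(v_i)\in U_i$, this sum is precisely the canonical decomposition of $T(u)$ (respectively $T(v)$). Then I would apply the additivity formula to $T(u)$ and $T(v)$:
\[
d_{(P,\pi)}(T(u),T(v))=\sum_{i=1}^m d_{(P_i,\pi_i)}(T_i(u_i),T_i(v_i)).
\]
Now each $T_i$ is by hypothesis a symmetry of $(U_i,d_{([n],\pi_i)})$, and $d_{(P_i,\pi_i)}$ is exactly the metric $d_{([n],\pi_i)}$ on $U_i$ (the restriction of $d_{(P,\pi)}$ to $U_i$ is the linear-order poset block metric, as noted in the text). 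Hence $d_{(P_i,\pi_i)}(T_i(u_i),T_i(v_i))=d_{(P_i,\pi_i)}(u_i,v_i)$ for each $i$. Substituting back and using the additivity formula once more, this time for $u$ and $v$, gives
\[
d_{(P,\pi)}(T(u),T(v))=\sum_{i=1}^m d_{(P_i,\pi_i)}(u_i,v_i)=d_{(P,\pi)}(u,v),
\]
so $T$ is distance preserving. Finally, a distance-preserving map on a finite metric space is injective (distinct points have positive distance, which must be preserved) and hence, by finiteness of $V$, bijective; therefore $T$ is a symmetry of $(V,d_{(P,\pi)})$.

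There is no real obstacle here: the argument is essentially a one-line computation once the additivity of the metric over components is granted, together with the identification of the restricted metric $d_{(P_i,\pi_i)}$ with the chain metric $d_{([n],\pi_i)}$. The only point deserving a moment's care is the remark that $\sum_i T_i(u_i)$ is genuinely the canonical decomposition of $T(u)$, which is immediate since $T_i$ maps $U_i$ into $U_i$ and the sum $V=U_1\oplus\cdots\oplus U_m$ is direct; this is what licenses applying the additivity formula to the image.
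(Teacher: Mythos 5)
Your proof is correct and follows essentially the same route as the paper: decompose $u$ and $v$ canonically, apply the additivity $d_{(P,\pi)}(u,v)=\sum_i d_{(P_i,\pi_i)}(u_i,v_i)$, and use that each $T_i$ preserves $d_{(P_i,\pi_i)}$. Your extra remarks on the image sum being the canonical decomposition and on bijectivity are harmless additions (bijectivity also follows immediately from each $T_i$ being a bijection).
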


\begin{proof}
Given $u,v\in V$, consider the canonical decompositions $u=u_{1}+\ldots+u_{m}$ and $%
v=v_{1}+\ldots+v_{m}$ with $u_{i},v_{i}\in U_{i}$. Then,
\begin{align*}
d_{(P,\pi)}\left( T(u),T(v)\right) & =d_{(P,\pi)}\left(
\sum_{i=1}^{m}T_{i}(u_{i}),\sum_{i=1}^{m}T_{i}(v_{i})\right) \\
& =\sum_{i=1}^{m}d_{(P_{i},\pi_{i})}\left( T_{i}(u_{i}),T_{i}(v_{i})\right) \\
& =\sum_{i=1}^{m}d_{(P_{i},\pi_{i})}\left( u_{i},v_{i}\right) \\
& =d_{(P,\pi)}\left( u,v\right).
\end{align*}
\end{proof}

Let $S_m$ be the permutation group of $\{1,2,\ldots,m\}$. We will call a permutation $\sigma \in S_m$ \textit{admissible} if $\sigma(i) = j$ implies that $k_{il}=k_{jl}$ for all $1 \leq l \leq n$. Cleary, the set $S_\pi$ of all admissible permutations is a subgroup of $S_m$.

Let us consider the canonical decomposition $v=v_{1}+v_{2}+\ldots+v_{m}$ of
a vector $v$ in the $(m,n,\pi)$-ordered Hamming block space $V$.
The group $S_{\pi}$ acts on $V$ as a group of symmetries: given $%
\sigma\in S_{\pi}$ and $v=v_{1}+v_{2}+\ldots+v_{m}\in V$, we define
\[
T_{\sigma}(v)=v_{\sigma(1)}+v_{\sigma(2)}+\ldots+v_{\sigma(m)}
\]

\begin{lemma}
\label{n} Let $(V,d_{(P,\pi)})$ be the $(m,n,\pi)$-ordered Hamming block space $V$ and let $\sigma\in S_{\pi}$. Then $T_{\sigma}$ is a symmetry of $(V,d_{(P,\pi)}).$
\end{lemma}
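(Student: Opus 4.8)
The plan is to show that $T_\sigma$ preserves the poset block distance by exploiting the decomposition $d_{(P,\pi)}(u,v) = \sum_{i=1}^m d_{(P_i,\pi_i)}(u_i,v_i)$ established earlier. Given $u = u_1 + \cdots + u_m$ and $v = v_1 + \cdots + v_m$ in canonical form, the key observation is that $T_\sigma(u) = u_{\sigma(1)} + \cdots + u_{\sigma(m)}$ has as its $i$-th canonical component the block $u_{\sigma(i)}$, and similarly for $v$. Hence
\[
d_{(P,\pi)}(T_\sigma(u), T_\sigma(v)) = \sum_{i=1}^m d_{(P_i,\pi_i)}(u_{\sigma(i)}, v_{\sigma(i)}),
\]
and since $\sigma$ is a permutation of $\{1,\ldots,m\}$, the index set $\{\sigma(1),\ldots,\sigma(m)\}$ is just $\{1,\ldots,m\}$ reordered, so this sum equals $\sum_{i=1}^m d_{(P_i,\pi_i)}(u_i,v_i) = d_{(P,\pi)}(u,v)$.

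The one subtlety — and this is exactly what the admissibility hypothesis is for — is to verify that $T_\sigma$ is a well-defined map $V \to V$: writing $u_{\sigma(i)}$ in the $i$-th slot only makes sense if $U_{\sigma(i)}$ and $U_i$ have matching block structure, i.e. $\dim(V_{\sigma(i),l}) = \dim(V_{i,l})$ for every $1 \leq l \leq n$. This is precisely the condition $k_{\sigma^{-1}(j),l} = k_{j,l}$ (equivalently $k_{i,l} = k_{\sigma(i),l}$) encoded in the definition of $S_\pi$. So the first step is to note that admissibility guarantees $d_{(P_i,\pi_i)}$ and $d_{(P_{\sigma(i)},\pi_{\sigma(i)})}$ are the same metric on a common space, legitimizing the reindexing in the displayed computation above. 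Once well-definedness is settled, the distance computation is the routine chain displayed above, and bijectivity of $T_\sigma$ follows either from $T_{\sigma^{-1}}$ being its inverse or from $V$ being a finite metric space on which $T_\sigma$ is distance-preserving.

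I expect essentially no real obstacle here: the proof is a short reindexing argument parallel to Lemma \ref{m}, with the only point requiring care being the explicit use of the admissibility condition to ensure $T_\sigma$ maps $V$ into itself and that the summands $d_{(P_i,\pi_i)}(u_{\sigma(i)},v_{\sigma(i)})$ are meaningfully comparable to $d_{(P_{\sigma(i)},\pi_{\sigma(i)})}$. In the write-up I would state this well-definedness observation first, then exhibit the one-line chain of equalities using the additivity of $d_{(P,\pi)}$ over the canonical decomposition together with the fact that summing over $i$ and over $\sigma(i)$ gives the same total.
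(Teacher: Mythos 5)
Your proposal is correct and follows essentially the same route as the paper: both use the additivity $d_{(P,\pi)}(u,v)=\sum_{i=1}^{m}d_{(P_i,\pi_i)}(u_i,v_i)$ over the canonical decomposition and then reindex the sum by the permutation $\sigma$. The only difference is that you make explicit the well-definedness point (that admissibility guarantees $U_{\sigma(i)}$ and $U_i$ have matching block dimensions so the reindexed components can occupy each other's slots), which the paper's proof leaves implicit; this is a worthwhile clarification but not a different argument.
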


\begin{proof}
Given $u,v\in V$, we consider their canonical decompositions $%
u=u_{1}+\ldots+u_{m}$ and $v=v_{1}+\ldots+v_{m}$ with $u_{i},v_{i}\in U_{i}$%
. Then,

\begin{align*}
d_{(P,\pi)}\left( T_{\sigma}(u),T_{\sigma}(v)\right) & =d_{(P,\pi)}\left( \sum
_{i=1}^{m}u_{\sigma(i)},\sum_{i=1}^{m}v_{\sigma(i)}\right) \\
& =\sum_{i=1}^{m}d_{(P_{i},\pi_{i})}\left( u_{\sigma(i)},v_{\sigma(i)}\right) \\
& =\sum_{i=1}^{m}d_{(P_{i},\pi_{i})}\left( u_{i},v_{i}\right) \\
& =d_{(P,\pi)}\left( u,v\right).
\end{align*}
\end{proof}

The two previous lemmas assure that the groups $\prod_{i=1}^m G_{i,\pi_{i},n}$ and $S_{\pi}$ are
both symmetry groups of the $(m,n,\pi)$-ordered Hamming block space $V$, and so is the group $G_{(m,n,\pi)}$ generated by both of
them. We identify $\prod_{i=1}^m G_{i,\pi_{i},n}$ and $S_{\pi}$ with their images in $G_{(m,n,\pi)}$ and make an abuse of notation, denoting the images in $G_{(m,n,\pi)}$ by
the same symbols. With this notation, analogous calculations as those of
Corollary \ref{corolario} show that $$\left(\prod_{i=1}^m G_{i,\pi_{i},n}\right) \cap S_{\pi}=\{id_{V}\}$$ and $$\sigma\circ \left(\prod_{i=1}^m G_{i,\pi_{i},n}\right) \circ \sigma^{-1}=\prod_{i=1}^m G_{i,\pi_{i},n}$$ for every $\sigma\in
S_{\pi}$. Since $\prod_{i=1}^m G_{i,\pi_{i},n}$ is normal in $G_{(m,n,\pi)}$ and $G_{(m,n,\pi)}$ is generated by $\prod_{i=1}^m G_{i,\pi_{i},n}$ and $S_{\pi}$, we have that $$G_{(m,n,\pi)} = \left(\prod_{i=1}^m G_{i,\pi_{i},n}\right) \cdot S_{\pi},$$  and we have proved the following:

\begin{proposition}
\label{Gmn} The group $G_{(m,n,\pi)}$ has the structure of a semi-direct product $$G_{(m,n,\pi)}=\left( \prod_{i=1}^m G_{i,\pi_{i},n} \right)
\mathbb{o} S_{\pi}.$$
\end{proposition}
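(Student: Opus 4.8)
The plan is to read the assertion as the standard internal characterization of a semidirect product, applied to the two subgroups $N:=\prod_{i=1}^{m}G_{i,\pi_{i},n}$ and $H:=S_{\pi}$ of $G_{(m,n,\pi)}$, identified with their images in $G_{(m,n,\pi)}$ as in the discussion preceding the statement (the maps $(T_{1},\dots,T_{m})\mapsto T$ of Lemma \ref{m} and $\sigma\mapsto T_{\sigma}$ of Lemma \ref{n} are injective, so they realize $N$ and $H$ as subgroups). Recall that a group $G$ with subgroups $N,H$ equals the internal semidirect product $N\rtimes H$ precisely when $N\trianglelefteq G$, $N\cap H=\{id\}$ and $G=NH$. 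Since $G_{(m,n,\pi)}$ is by definition the subgroup generated by $N$ and $H$, only these three facts must be checked, and the computations assembled just before the statement already supply them.

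First I would verify $N\cap H=\{id_{V}\}$. An element of $N$ has the form $(T_{1},\dots,T_{m})$ with $T_{i}\colon U_{i}\to U_{i}$, hence it preserves every canonical summand $U_{i}$ setwise. On the other hand, for $\sigma\in S_{\pi}$ and a vector $v\in U_{j}$ (so $v_{k}=0$ for $k\neq j$), the formula $T_{\sigma}(v)=v_{\sigma(1)}+\dots+v_{\sigma(m)}$ shows that the sole nonzero component of $T_{\sigma}(v)$ lies in the slot $\sigma^{-1}(j)$, so $T_{\sigma}$ maps $U_{j}$ onto the summand $U_{\sigma^{-1}(j)}$. As $U_{1}\oplus\dots\oplus U_{m}$ is a direct sum of nonzero subspaces, $U_{\sigma^{-1}(j)}=U_{j}$ forces $\sigma^{-1}(j)=j$; so any element of $N\cap H$ comes from $\sigma=id$ and equals $id_{V}$.

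Next I would carry out the conjugation computation showing $H$ normalizes $N$, which I expect to be the one point demanding genuine (if modest) care. With the convention that the $i$-th component of $T_{\sigma}w$ is $w_{\sigma(i)}$ one has $T_{\sigma}^{-1}=T_{\sigma^{-1}}$, and a direct calculation gives
\[
T_{\sigma}\circ(T_{1},\dots,T_{m})\circ T_{\sigma}^{-1}=(T_{\sigma(1)},T_{\sigma(2)},\dots,T_{\sigma(m)}).
\]
The essential step is to check that the right-hand side actually lies in $N$: because $\sigma\in S_{\pi}$ is admissible, $k_{i\ell}=k_{\sigma(i)\ell}$ for all $\ell$, so $U_{i}$ and $U_{\sigma(i)}$ carry identical block structures and there is a canonical block-preserving isometry $U_{i}\to U_{\sigma(i)}$; under it $T_{\sigma(i)}$ becomes a symmetry of $(U_{i},d_{([n],\pi_{i})})$ and $G_{\sigma(i),\pi_{\sigma(i)},n}$ is identified with $G_{i,\pi_{i},n}$. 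Hence $T_{\sigma}\,N\,T_{\sigma}^{-1}=N$ for every $\sigma\in S_{\pi}$.

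Finally, $N$ is normalized by $H$ and trivially by itself, so $N$ is normal in $G_{(m,n,\pi)}=\langle N,H\rangle$; using this normality every product of generators can be rewritten with all $N$-factors collected on the left, which gives $G_{(m,n,\pi)}=NH$. The three conditions of the internal semidirect product criterion now hold, so $G_{(m,n,\pi)}=N\rtimes H=\bigl(\prod_{i=1}^{m}G_{i,\pi_{i},n}\bigr)\rtimes S_{\pi}$, as claimed.
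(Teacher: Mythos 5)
Your proof is correct and follows essentially the same route as the paper, which establishes exactly these three facts (trivial intersection, the conjugation identity $T_{\sigma}NT_{\sigma}^{-1}=N$ using admissibility, and $G_{(m,n,\pi)}=NH$) in the discussion immediately preceding the proposition. One small slip worth noting: a general element of $N$ need \emph{not} preserve each $U_{j}$ setwise, since the symmetries $T_{i}\in G_{i,\pi_{i},n}$ need not fix the origin (e.g.\ translations); however, for $L\in N\cap H$ one has $L(0)=T_{\sigma}(0)=0$, which forces each $T_{i}(0)=0$ because the sum $U_{1}\oplus\cdots\oplus U_{m}$ is direct, and then your argument goes through unchanged.
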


We need two more lemmas in order to prove that every symmetry of the $(m,n,\pi)$-ordered Hamming block space $V$ is in $G_{(m,n,\pi)}$, i.e., that $G_{(m,n,\pi)}$ is the group of symmetries of $V$.

\begin{lemma}
\label{cadeia}Let $(V,d_{(P,\pi)})$ be the $(m,n,\pi)$-ordered Hamming block space and let $V=U_{1}\oplus U_{2}\oplus\cdots\oplus U_{m}$
be the canonical decomposition of $V$. If $\pi=(k_{11},\ldots,k_{1n},\ldots,k_{m1},\ldots,k_{mn})$ and $T:V\rightarrow V$ is a symmetry
such that $T(0)=0$, then for each index $1\leq i\leq m$ there corresponds
another index $1\leq j\leq m$ such that $$T(U_{i})=U_{j}$$ and $$k_{il}=\dim(V_{il})=\dim(V_{jl})=k_{jl}$$ for all $1 \leq l \leq n$.
\end{lemma}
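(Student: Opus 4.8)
The plan is to reconstruct the canonical summands $U_i$ from a purely metric relation and then read off the block dimensions from the internal chain structure. First, since $T(0)=0$ and $T$ is distance preserving, $T$ preserves the $(P,\pi)$-weight: $\omega_{(P,\pi)}(T(v))=d_{(P,\pi)}(T(v),0)=d_{(P,\pi)}(v,0)=\omega_{(P,\pi)}(v)$. Writing the block support of $v$ as $\mathrm{Supp}(v):=\{i:v_i\neq 0\}$, I would next verify that for $u,v\in V$ one has $d_{(P,\pi)}(u,v)=d_{(P,\pi)}(u,0)+d_{(P,\pi)}(0,v)$ if and only if $\mathrm{Supp}(u)\cap\mathrm{Supp}(v)=\emptyset$. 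Indeed $d_{(P,\pi)}(u,v)=\sum_{i=1}^m d_{(P_i,\pi_i)}(u_i,v_i)$, and within a chain the identity $d_{(P_i,\pi_i)}(u_i,v_i)=d_{(P_i,\pi_i)}(u_i,0)+d_{(P_i,\pi_i)}(v_i,0)$ forces $u_i=0$ or $v_i=0$, because otherwise the left side is at most $\max\{d_{(P_i,\pi_i)}(u_i,0),d_{(P_i,\pi_i)}(v_i,0)\}$, which is strictly below the sum of two positive integers. Call this (metric) relation $u\perp v$; by the weight-preservation just noted, $T$ preserves $\perp$.

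The second step is to recover $U_i$ up to the indexing of the $U_j$'s. For $v\neq 0$ the set $v^{\perp}:=\{w\neq 0:w\perp v\}$ equals $\big(\bigoplus_{i\notin \mathrm{Supp}(v)}U_i\big)\setminus\{0\}$, whence $u^{\perp}=v^{\perp}$ if and only if $\mathrm{Supp}(u)=\mathrm{Supp}(v)$. Hence $T$ permutes the support classes $C_S:=\{v\in V:\mathrm{Supp}(v)=S\}$, $\emptyset\neq S\subseteq[m]$, and induces a bijection $\phi$ of $2^{[m]}\setminus\{\emptyset\}$ with $T(C_S)=C_{\phi(S)}$; since $T$ preserves $\perp$ one gets $S\cap S'=\emptyset\iff\phi(S)\cap\phi(S')=\emptyset$. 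The number of nonempty subsets disjoint from $S$ is $2^{m-|S|}-1$, a quantity strictly decreasing in $|S|$, and it is preserved by $\phi$, so $|\phi(S)|=|S|$; in particular $\phi$ sends singletons to singletons, i.e. $\phi(\{i\})=\{\sigma(i)\}$ for some $\sigma\in S_m$. Since $C_{\{i\}}=U_i\setminus\{0\}$ and $T(0)=0$, this gives $T(U_i)=U_{\sigma(i)}=:U_j$.

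For the block dimensions — the feature absent from the poset (non-block) case — I would note that $T|_{U_i}:U_i\to U_j$ is a bijection, and that for $u,u'\in U_i$ both $d_{(P,\pi)}(u,u')$ and $d_{(P,\pi)}(T(u),T(u'))$ are computed entirely in a single chain, so $T|_{U_i}$ is an isometry between the linear ordered block spaces $(U_i,d_{([n],\pi_i)})$ and $(U_j,d_{([n],\pi_j)})$ fixing $0$; therefore it preserves the chain weight $d_{([n],\pi_i)}(\cdot,0)$. For the order $1<2<\cdots<n$ the vectors of weight $\le l$ in $U_i$ are exactly those of $V_{i1}\oplus\cdots\oplus V_{il}$, that is, $q^{k_{i1}+\cdots+k_{il}}$ of them; equating these counts for $U_i$ and $U_j$ yields $k_{i1}+\cdots+k_{il}=k_{j1}+\cdots+k_{jl}$ for every $l$, and differencing successive equalities gives $k_{il}=\dim(V_{il})=\dim(V_{jl})=k_{jl}$ for all $1\le l\le n$.

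I expect the only delicate point to be the first step: isolating the correct metric invariant — the betweenness relation $d_{(P,\pi)}(u,v)=d_{(P,\pi)}(u,0)+d_{(P,\pi)}(0,v)$ and its identification with disjointness of block supports. Once that identification is in place, the passage to the permutation $\sigma$ of the chains and the dimension bookkeeping in the last step are routine.
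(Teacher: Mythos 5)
Your proof is correct and complete, and it takes a genuinely different route from the paper's. The paper argues level by level inside each chain: it first shows that the bottom blocks satisfy $T(V_{i1})=V_{j1}$ (weight-one vectors go to weight-one vectors, and any two distinct nonzero vectors of the same bottom block are at distance $1$, which pins them to a common chain), and then runs an induction on the depth $k$ proving $T(V_{i1}\oplus\cdots\oplus V_{ik})=V_{j1}\oplus\cdots\oplus V_{jk}$ via two contradiction arguments that compare sums of chain weights, extracting $k_{ik}=k_{jk}$ from bijectivity at each step. You instead recover the chain components in one stroke from the additivity relation $d_{(P,\pi)}(u,v)=\omega_{(P,\pi)}(u)+\omega_{(P,\pi)}(v)$, which is a metric invariant of origin-fixing symmetries and characterizes disjointness of chain supports; your key inequality $d_{(P_i,\pi_i)}(u_i,v_i)\le\max\{\omega(u_i),\omega(v_i)\}$ is exactly right, and it holds because any two nonempty ideals of a chain share the minimal element --- so, like the paper's induction, your argument is ultimately tied to the nestedness of ideals in a chain, just exploited globally rather than level by level. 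The combinatorial step (the count $2^{m-|S|}-1$ of nonempty subsets disjoint from $S$ is strictly decreasing in $|S|$ and is preserved by the induced bijection $\phi$, hence $\phi$ sends singletons to singletons) is sound, granting the routine verification that $\phi$ is indeed a bijection of nonempty supports, which follows by applying the same reasoning to $T^{-1}$. Your dimension bookkeeping --- counting the $q^{k_{i1}+\cdots+k_{il}}$ vectors of weight at most $l$ in $U_i$ and in $U_j$ and differencing --- makes explicit the counting that the paper compresses into the one-line remark that bijectivity forces $k_{1k}=k_{lk}$. What your approach buys is the elimination of the induction and of the two somewhat delicate weight-sum contradictions; what the paper's buys is the finer intermediate fact that $T$ respects every level set $V_{j1}\oplus\cdots\oplus V_{jk}$ of each chain, which is stronger than what the lemma asserts but is not needed for the rest of the paper.
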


\begin{proof}
In the following we denote the subspace $V_{i1} \oplus V_{i2} \oplus \ldots \oplus V_{ik}$ by $U_{ik}$. We begin by showing that for each index $1\leq i\leq m$ there corresponds another index $1\leq j\leq m$ such that $T(U_{i1})=U_{j1}$ and $k_{i1}=k_{j1}$.

Let $v_i \in U_{i1}$, $v_i \neq 0$. Since $$d_{(P,\pi)}(T(v_i),0)=d_{(P,\pi)}(v_i,0)=1,$$ then $T(v_i)$ is a vector of $(P,\pi)$-weight $1$. It follows that $T(v_i) \in U_{j1}$ for some index $1\leq j\leq m$. If $v_{i}' \in U_{i1}$, $v_{i}' \neq v_i$ and $v_{i}' \neq 0$, then $T(v_{i}')=v_k$ for some $v_k \in U_{k1}$ with $v_k \neq 0$, but also $$d_{(P,\pi)}(T(v_i),T(v_i'))=d_{(P,\pi)}(v_i,v_i')=1.$$ If $k \neq j$, then $d_{(P,\pi)}(T(v_i),T(v_i'))=d_{(P,\pi)}(v_j,v_k)=2$. Hence $k=j$ and $T(U_{i1}) \subseteq U_{j1}$.

Now apply the same reasoning to $T^{-1}$. If $v_i \in U_{i1}$, $v_i \neq 0$, and $T(v_i)=v_j$ with $v_j \in U_{j1}$, then $T^{-1}(v_j) \in U_{i1}$ and therefore $T^{-1}(U_{j1}) \subseteq U_{i1}$. So that $U_{j1} \subseteq T(U_{i1})$. If follows that $T(U_{i1})=U_{j1}$.

We have that $k_{i1}=k_{j1}$ because $T$ is bijective.

We will prove now, by induction on $k$, that for each $s$ there exists an index $l$ such
that
\[
T(U_{sk})=U_{lk}
\]
and $k_{sj}=k_{lj}$ for all $1\leq j\leq k$ and for all $1\leq k\leq n$. We note that $U_{sn} = U_s$.

Without loss of generality, let us consider $s=1$, $P_{1}=\{(1,1),\ldots,
(1,n)\}$. Let $P_{l}$ be the chain that begins at $(l,1)$ such that $T(U_{11})=U_{l1}$ and suppose that $U_{1(k-1)}$ is taken
by $T$ onto $U_{l(k-1)}$ with $k_{1j}=k_{lj}$ for all $1\leq j\leq k-1$.

Let $v=v_{11}+\ldots+v_{1k}$, $v_{1i} \in V_{1i}$, and let $%
T(v)=u_{1}+\ldots+u_{m}$, $u_{i}\in U_{i}$. Since $T(0)=0$,
\[
\omega_{(P,\pi)}(v)=\omega_{(P,\pi)}(T(v))=\omega_{(P,\pi)}(u_{1})+\ldots+\omega_{(P,\pi)}(u_{m}).
\]
We will use this to show that $T(v)=u_{l}$. First suppose that $u_{l}=0$. In
this case, $\omega_{(P,\pi)}(v)=\sum_{j\neq l}\omega_{(P,\pi)}(u_{j})$ and therefore, if $u_{11} \in U_{11}$, $u_{11} \neq 0$, with $T(u_{11})=u_{l1}$,
\[
k=d_{(P,\pi)}(u_{11},v)=d_{(P,\pi)}(T(u_{11}),T(v))=\sum_{j\neq
l}\omega_{(P,\pi)}(u_{j})+\omega_{(P,\pi)}(u_{l1})=k+1,
\]
a contradiction. Hence $u_{l}\neq0$.

Let $u_{l}=u_{l1}+\ldots+u_{lt}$, $u_{li} \in V_{li}$, and suppose now there
is another summand $u_{i}\neq0$. Then $k = \displaystyle\sum_{j} \omega
_{(P,\pi)}(u_{j}) > \omega_{(P,\pi)}(u_{l})$ and therefore $t<k$. By the induction
hypothesis, $T^{-1}(u_{l})$ is a vector in $V_{1(k-1)}$ with $\omega_{(P,\pi)}(T^{-1}(u_{l}))<k.$ Hence
\[
k=d_{(P,\pi)}(T^{-1}(u_{l}),v)=d_{(P,\pi)}(u_{l},T(v))=\sum_{j\neq
l}\omega_{(P,\pi)}(u_{j})<k,
\]
again a contradiction. Hence, $T(v) \in U_{lk}$. It follows from the
induction hypothesis and from the fact that $T$ is a weight-preserving
bijection that
\[
T(v_{11}+\ldots+v_{1k})=u_{l1}+\ldots+u_{lk}
\]
where $v_{1k} \neq 0$ implies $u_{lk} \neq 0$. Therefore $T(U_{1k}) =
U_{lk}$. Since $k_{1j}=k_{lj}$ for all $1\leq j\leq k-1$ and $T$ is a bijection, it follows that $k_{1k}=k_{lk}$. Hence $T(U_{1})=U_{l}$ with $k_{1j}=k_{lj}$ for all $1\leq j\leq n$.
\end{proof}

We recall that we defined an action of the group $S_{\pi}$ of the admissible
permutations of $S_m$ on the canonical decomposition $U_{1}\oplus U_{2}\oplus\cdots\oplus U_{m}$ of $V$ by
\[
T_{\sigma}(v):=v_{\sigma(1)}+v_{\sigma(2)}+\ldots+v_{\sigma(m)}
\]
and that we defined an action of $\prod_{i=1}^m G_{i,\pi_{i},n}$ on $V$ by
\[
(g_{1},g_{2},\ldots,g_{m})(v_{1}+v_{2}+\ldots+v_{m})=g_{1}(v_{1})+\ldots
+g_{m}(v_{m}).
\]

\begin{lemma}
\label{lemma}Let $(V,d_{(P,\pi)})$ be the $(m,n,\pi)$-ordered Hamming block space. Each symmetry of $V$ that preserves the origin is a
product $T_{\sigma} \circ g$, with $\sigma$ in $S_{\pi}$ and $g$ in $\prod_{i=1}^m G_{i,\pi_{i},n}$.
\end{lemma}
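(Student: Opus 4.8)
The plan is to use Lemma~\ref{cadeia} to extract from $T$ an admissible permutation of the chain-blocks, to divide it out, and then to show that what remains acts on each $U_i$ separately. Concretely, let $T$ be a symmetry of $V$ with $T(0)=0$. By Lemma~\ref{cadeia}, $T$ permutes the subspaces $U_1,\ldots,U_m$: there is a permutation $\rho$ of $\{1,\ldots,m\}$ with $T(U_i)=U_{\rho(i)}$ and $k_{il}=k_{\rho(i)\,l}$ for all $1\le l\le n$. The dimension equalities are precisely the condition that $\rho$ is admissible, so $\rho\in S_\pi$, and by Lemma~\ref{n} the map $T_\rho$ is a symmetry of $V$. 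Put $g:=T_\rho\circ T$; this is again a symmetry of $V$ with $g(0)=0$, and since $T_\rho(U_{\rho(i)})=U_i$ we have $g(U_i)=U_i$ for every $i$. As $T_\rho^{-1}=T_{\rho^{-1}}$ and $\rho^{-1}\in S_\pi$, we get $T=T_{\rho^{-1}}\circ g$; writing $\sigma:=\rho^{-1}\in S_\pi$, it therefore suffices to prove that $g\in\prod_{i=1}^m G_{i,\pi_i,n}$, for then $T=T_\sigma\circ g$ has the asserted form.

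The core step is to show that $g$ respects the canonical decomposition, i.e.\ that $g(v_1+\cdots+v_m)=g(v_1)+\cdots+g(v_m)$ for all $v_i\in U_i$; here, since $g(U_i)=U_i$, the map $g_i:=g|_{U_i}$ is a bijection of $U_i$ onto itself and $g(v_i)$ is read as $g_i(v_i)\in U_i$. Fix $v=v_1+\cdots+v_m$ and write $g(v)=w_1+\cdots+w_m$ with $w_k\in U_k$. Using the additivity $d_{(P,\pi)}=\sum_k d_{(P_k,\pi_k)}$, the fact that $g$ is an isometry, and $g_i(v_i)\in U_i$, one computes for each $i$
\begin{align*}
\sum_{k\neq i}\omega_{(P_k,\pi_k)}(v_k)
&=d_{(P,\pi)}(v,v_i)=d_{(P,\pi)}\big(g(v),g_i(v_i)\big)\\
&=\sum_{k\neq i}\omega_{(P_k,\pi_k)}(w_k)+d_{(P_i,\pi_i)}\big(w_i,g_i(v_i)\big);
\end{align*}
moreover $g(0)=0$ gives the weight identity $\sum_k\omega_{(P_k,\pi_k)}(w_k)=\omega_{(P,\pi)}(g(v))=\omega_{(P,\pi)}(v)=\sum_k\omega_{(P_k,\pi_k)}(v_k)$. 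Combining the two relations yields, for every $i$,
\[
\omega_{(P_i,\pi_i)}(w_i)=\omega_{(P_i,\pi_i)}(v_i)+d_{(P_i,\pi_i)}\big(w_i,g_i(v_i)\big)\ \ge\ \omega_{(P_i,\pi_i)}(v_i).
\]
Summing over $i$ and invoking the weight identity forces equality in each of these inequalities, hence $d_{(P_i,\pi_i)}(w_i,g_i(v_i))=0$, i.e.\ $w_i=g_i(v_i)$ for all $i$. Thus $g(v)=\sum_i g_i(v_i)$, as claimed.

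Finally, each $g_i$ is a distance-preserving bijection of $U_i$ for the restricted metric $d_{(P_i,\pi_i)}$, which (as recalled before Lemma~\ref{m}) is isometric to the linear ordered block space $(U_i,d_{([n],\pi_i)})$; hence $g_i\in G_{i,\pi_i,n}$, and therefore $g=(g_1,\ldots,g_m)\in\prod_{i=1}^m G_{i,\pi_i,n}$ in the sense of the action defined above. This gives $T=T_\sigma\circ g$ with $\sigma\in S_\pi$ and $g\in\prod_{i=1}^m G_{i,\pi_i,n}$, which is the assertion. I expect the middle step --- that the residual symmetry $g$ is ``diagonal'' with respect to $U_1\oplus\cdots\oplus U_m$ --- to be the main obstacle, since $g$ is a priori only a set-theoretic isometry and one must exploit the additivity of the block metric together with weight preservation to squeeze all off-block contributions to zero; by contrast, extracting $\rho$ and identifying the $g_i$ with elements of $G_{i,\pi_i,n}$ are routine once Lemmas~\ref{cadeia}, \ref{m} and \ref{n} are in hand.
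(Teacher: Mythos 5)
Your proof is correct and follows the same route as the paper: extract the admissible permutation of the chains from Lemma~\ref{cadeia}, factor out the corresponding $T_\sigma$, and identify the remaining origin-preserving symmetry with an element of $\prod_{i=1}^m G_{i,\pi_i,n}$. The one place you go beyond the paper is your weight-counting argument that the residual symmetry $g$ acts diagonally, i.e.\ $g(v_1+\cdots+v_m)=g_1(v_1)+\cdots+g_m(v_m)$: the paper simply asserts $g=(g_1,\ldots,g_m)$ from $g(U_i)=U_i$, whereas this is not automatic for a set-theoretic isometry, so your additivity-plus-weight-preservation argument correctly supplies a justification for a step the published proof leaves implicit.
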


\begin{proof}
Let $T$ be a symmetry of $V$, $T(0)=0$. By the previous result, for each $%
1\leq i \leq m$ there is a $\sigma(i)$ such that $T(U_{i})=U_{\sigma(i)}$ with $k_{il}=k_{\sigma(i)l}$ for all $1\leq l \leq n$. Since $T$ is a bijection, it follows that the map $i\mapsto \sigma(i)$ is an admissible permutation
of the set $\{1,\ldots,m\}$. We define $T_{\sigma}: V \rightarrow V $ by
\[
T_{\sigma}(v):=v_{\sigma(1)}+v_{\sigma(2)}+\ldots+v_{\sigma(m)}
\]
and then $T=T_{\sigma}(T_{\sigma}^{-1}T)$, where $\sigma\in S_{\pi} $. Let $%
g=(T_{\sigma} ^{-1}T)$. Clearly $g(U_{i})=U_{i}$, and $g|_{U_{i}}$ is a
symmetry of $U_{i}$. Defining $g_{i}:=g|_{U_{i}}$ we have that $g=(g_{1},\ldots,g_{m})$ and hence $g \in \prod_{i=1}^m G_{i,\pi_{i},n}$.
\end{proof}

\begin{theorem}
\label{Theorem RT}\label{Theorem}Let $(V,d_{(P,\pi)})$ be the $(m,n,\pi)$-ordered Hamming block space. The group of symmetries of
$V$ is isomorphic to $$\left(\prod_{i=1}^m G_{i,\pi_{i},n}\right) \mathbb{o} S_{\pi}.$$
\end{theorem}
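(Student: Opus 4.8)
The plan is to combine the structural description from Proposition \ref{Gmn} with the two lemmas immediately preceding the theorem, so that the only remaining work is to reduce an arbitrary symmetry to one that fixes the origin. First I would show that every symmetry $T$ of $(V,d_{(P,\pi)})$ can be written as a translation followed by a symmetry that fixes $0$. Indeed, translations $\tau_w\colon v\mapsto v+w$ are symmetries since $d_{(P,\pi)}$ is translation invariant (it depends only on $u-v$); hence, given a symmetry $T$, the composition $\tau_{-T(0)}\circ T$ is a symmetry fixing the origin. But one must be careful: a priori this would enlarge the group by the translation subgroup $(\mathbb{F}_q^N,+)$, which is \emph{not} a subgroup of $G_{(m,n,\pi)}$ as defined. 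The resolution is that each translation is already absorbed into $\prod_{i=1}^m G_{i,\pi_i,n}$: a translation by $w=w_1+\cdots+w_m$ acts on each $U_i$ as the translation by $w_i$, and by Theorem \ref{prop3} every translation of a linear ordered block space $U_i$ is of the form $T_{(F_1,\ldots,F_n)}$, hence lies in $G_{i,\pi_i,n}$. So no genuinely new elements appear, and it suffices to understand symmetries fixing the origin.

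Second, I would invoke Lemma \ref{lemma}: every symmetry $T$ with $T(0)=0$ factors as $T_\sigma\circ g$ with $\sigma\in S_\pi$ and $g\in\prod_{i=1}^m G_{i,\pi_i,n}$. Together with the first step this shows $\mathrm{Symm}(V,d_{(P,\pi)})=G_{(m,n,\pi)}$ as a set. Third, I would cite Proposition \ref{Gmn}, which already establishes that $G_{(m,n,\pi)}$ carries the semidirect product structure $\bigl(\prod_{i=1}^m G_{i,\pi_i,n}\bigr)\rtimes S_\pi$ — the normality of $\prod_{i=1}^m G_{i,\pi_i,n}$, the triviality of the intersection with $S_\pi$, and the factorization $G_{(m,n,\pi)}=\bigl(\prod_i G_{i,\pi_i,n}\bigr)\cdot S_\pi$ were all verified there. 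Chaining these three facts gives the isomorphism claimed in the theorem.

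The main obstacle — and the point that deserves the most care in the write-up — is precisely the translation issue flagged above: one must verify that the translations one introduces to normalize $T(0)$ to $0$ do not escape $G_{(m,n,\pi)}$, so that the equality $\mathrm{Symm}(V,d_{(P,\pi)})=G_{(m,n,\pi)}$ is literally correct and not merely up to translations. The clean way to phrase this is: Lemma \ref{m} and Lemma \ref{n} show $G_{(m,n,\pi)}\subseteq\mathrm{Symm}(V,d_{(P,\pi)})$; conversely, for any symmetry $T$ set $S=\tau_{-T(0)}\circ T$, note $S(0)=0$, apply Lemma \ref{lemma} to get $S\in G_{(m,n,\pi)}$, observe $\tau_{T(0)}\in\prod_{i=1}^m G_{i,\pi_i,n}\subseteq G_{(m,n,\pi)}$ (by Theorem \ref{prop3}, restricting to each chain), and conclude $T=\tau_{T(0)}\circ S\in G_{(m,n,\pi)}$. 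Then quote Proposition \ref{Gmn} for the semidirect product structure. Everything else is bookkeeping already done in the preceding lemmas.
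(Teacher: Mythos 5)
Your proposal is correct and follows essentially the same route as the paper's proof: reduce to an origin-fixing symmetry via the translation $S_{-T(0)}$, apply Lemma \ref{lemma}, absorb the translation back into $\prod_{i=1}^m G_{i,\pi_i,n}$ by noting it restricts to a translation on each $U_i$, and invoke Proposition \ref{Gmn} for the semidirect product structure. The ``translation issue'' you flag is exactly the point the paper also addresses, so nothing further is needed.
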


\begin{proof}
As before, let $G_{(m,n,\pi)}$ be the group of symmetries of $V$ generated by
the action of $\prod_{i=1}^m G_{i,\pi_{i},n}$ and $S_{\pi}$. Let $T$ be a symmetry
of $V $ and let $v=T(0)$. The translation $S_{-v}(u):=u-v$ is clearly a
symmetry of $V$ and $\left( S_{-v}\circ T\right) (0)=S_{-v}(v)=0$ is a
symmetry that fixes the origin. Hence, by the previous lemma, $S_{-v}\circ
T\in G_{(m,n,\pi)}$. Consider the canonical decomposition of $v$ on the chain spaces, $%
v=v_{1}+\ldots +v_{m},v_{i}\in U_{i}$. We see that the restriction of $S_{v}
$ to $U_{i}$ is the translation by $v_{i}$, hence a symmetry of $U_{i}$. It
follows that $S_{v}\in\prod_{i=1}^m G_{i,\pi_{i},n}\subset G_{(m,n,\pi)}$ and
hence, that $T=S_{v}\circ(S_{-v}\circ T)$ is in $G_{(m,n,\pi)}$ and we
conclude that $G_{(m,n,\pi)}$ is the symmetry group of $V$. By Proposition
\ref{Gmn}, $G_{(m,n,\pi)}$ is isomorphic to $\left(\prod_{i=1}^m G_{i,\pi_{i},n}\right) \mathbb{o} S_{\pi}$.
\end{proof}

If $n=1$ ($P$ is an antichain) and $\pi=(k_1,k_2,\ldots,k_m)$, where $$k_1=\ldots=k_{m_1}=l_1, \ldots, k_{m_1+\ldots +m_{l-1}+1}=\ldots =k_{m_1+\ldots +m_{l}}=l_r $$ with $l_1>\ldots>l_r$, we have that  $G_{i,(k_i),1} = S_{q^{k_i}}$, $1 \leq i \leq m$, and $S_{\pi}=S_{m_1} \times \ldots \times S_{m_l}$ ($S_{\pi}$ only permutes those blocks with same dimensions). Therefore:

\begin{corollary}
If $P$ is an antichain, then $$\textit{Symm}(V,d_{(P,\pi)})=\left( \prod_{i=1}^{m}S_{q^{k_i}} \right) \mathbb{o} \left( \prod_{i=1}^{l} S_{m_i} \right).$$
\end{corollary}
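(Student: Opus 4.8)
The plan is to obtain this as a direct specialization of Theorem~\ref{Theorem RT}. When $P$ is an antichain on $[m]$ it is exactly the union of $m$ disjoint chains each of length $n=1$, so $(V,d_{(P,\pi)})$ is precisely the $(m,1,\pi)$-ordered Hamming block space with $\pi=(k_1,\ldots,k_m)$, canonical decomposition $V=U_1\oplus\cdots\oplus U_m$, and $U_i=V_{i1}=\mathbb{F}_q^{k_i}$. Theorem~\ref{Theorem RT} then already gives $\textit{Symm}(V,d_{(P,\pi)})\cong\left(\prod_{i=1}^m G_{i,(k_i),1}\right)\mathbb{o}\,S_\pi$, so the whole task reduces to identifying the two factors in concrete terms.

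For the first factor, I would observe that the poset block metric on $U_i=\mathbb{F}_q^{k_i}$ coming from the one-element chain and the single block of dimension $k_i$ is, by the definition of $\omega_{(P,\pi)}$, given by $\omega(v)=|\langle\mathrm{supp}(v)\rangle|$, which is $0$ if $v=0$ and $1$ otherwise; hence $d_{([1],(k_i))}$ is the discrete metric on a set of $q^{k_i}$ points, every bijection of $U_i$ is an isometry, and therefore $G_{i,(k_i),1}=S_{q^{k_i}}$. (This also follows from Corollary~\ref{corolario} with $n=1$, where the semi-direct product collapses to the single factor $(S_{q^{k_1}})^{q^{0}}=S_{q^{k_1}}$.)

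For the second factor, with $n=1$ the admissibility condition defining $S_\pi$ --- that $\sigma(i)=j$ force $k_{il}=k_{jl}$ for all $l$ --- becomes simply $k_i=k_j$. So, grouping $[m]$ according to the common value of $k_i$ as set up before the corollary (distinct values $l_1>\cdots$ occurring with multiplicities $m_1,\ldots,m_l$), an admissible permutation is exactly one that stabilizes each of these groups, whence $S_\pi=S_{m_1}\times\cdots\times S_{m_l}$. Substituting both identifications into the semi-direct product furnished by Theorem~\ref{Theorem RT} yields the stated formula. I do not expect any real obstacle: the corollary is a transcription of Theorem~\ref{Theorem RT} in the antichain case, and the only point needing (immediate) verification is that the one-block, length-one metric is discrete.
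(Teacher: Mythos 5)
Your proposal is correct and follows exactly the paper's own route: the paper likewise obtains this corollary by specializing Theorem~\ref{Theorem RT} to $n=1$, noting that $G_{i,(k_i),1}=S_{q^{k_i}}$ and that admissibility reduces to $k_i=k_j$, giving $S_\pi=S_{m_1}\times\cdots\times S_{m_l}$. Your explicit check that the one-block, length-one metric is the discrete metric is a small welcome addition the paper leaves implicit.
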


When $n=1$ and $\pi = (1,1,\ldots,1)$, the $(P,\pi)$-weight is the usual Hamming weight on $\mathbb{F}_q^{m}$. In this case each $G_{i,(1),1}$ in above corollary is equal to $S_q$ and every permutation in $S_m$ is also admissible. Thus we reobtain the symmetry groups of Hamming space from our previous calculations:

\begin{corollary}
Let $d_H$ be the Hamming metric over $\mathbb{F}_q^{m}$. The symmetry group of $(\mathbb{F}_q^{m},d_H)$ is isomorphic to $S_q^m \mathbb{o} S_m$.
\end{corollary}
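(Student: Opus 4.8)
The plan is to obtain this corollary as the most degenerate special case of Theorem~\ref{Theorem RT}: take $n=1$ and the partition $\pi=(1,1,\ldots,1)$ with $m$ parts. With $n=1$ the poset $P=([m\cdot 1],\leq)$ carries no strict comparabilities, so it is an antichain, and for $v=(v_{1},\ldots,v_{m})\in\mathbb{F}_{q}^{m}$ the ideal generated by $\mathrm{supp}(v)$ is $\mathrm{supp}(v)$ itself; hence $\omega_{(P,\pi)}(v)=|\{i:v_{i}\neq 0\}|$ is the Hamming weight and $d_{(P,\pi)}=d_{H}$. Consequently $\mathrm{Symm}(\mathbb{F}_{q}^{m},d_{H})=\mathrm{Symm}(V,d_{(P,\pi)})$, and it remains only to read off the two factors of the semidirect product furnished by Theorem~\ref{Theorem RT}.

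First I would identify the chain factors. Here each chain space is $U_{i}=V_{i1}=\mathbb{F}_{q}^{k_{i1}}$ with $k_{i1}=1$, carrying the poset block metric of the one-element chain; by Theorem~\ref{prop3} applied with $n=1$, its symmetry group is the group of all bijections of the $q$-element set $\mathbb{F}_{q}$, that is $G_{i,(1),1}\cong S_{q}$. Next I would treat the permutation factor: since every block has dimension $k_{i1}=1$, the admissibility condition ``$\sigma(i)=j$ implies $k_{il}=k_{jl}$ for all $l$'' holds for every $\sigma\in S_{m}$, so $S_{\pi}=S_{m}$.

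Finally, substituting these into Theorem~\ref{Theorem RT} gives
\[
\mathrm{Symm}(\mathbb{F}_{q}^{m},d_{H})\;\cong\;\Bigl(\prod_{i=1}^{m}G_{i,(1),1}\Bigr)\rtimes S_{\pi}\;\cong\;\Bigl(\prod_{i=1}^{m}S_{q}\Bigr)\rtimes S_{m}\;=\;S_{q}^{m}\rtimes S_{m},
\]
the asserted wreath product $S_{q}\wr S_{m}$. Being a pure specialization, this argument presents no real obstacle; the only steps requiring a word of justification are the identification of $d_{(P,\pi)}$ with the Hamming metric when $n=1$ and $\pi$ is trivial, and the remark that admissibility becomes vacuous once all blocks are one-dimensional, both of which are immediate from the definitions in Sections~2 and~4.
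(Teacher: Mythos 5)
Your proof is correct and follows essentially the same route as the paper: the paper obtains this corollary by specializing its antichain corollary (itself the $n=1$ case of Theorem~\ref{Theorem RT}) to $\pi=(1,\ldots,1)$, noting exactly as you do that each $G_{i,(1),1}=S_q$ and that every permutation is admissible when all blocks are one-dimensional. No further comment is needed.
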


If $\pi = (1,1,\ldots,1)$ also every permutation in $S_m$ is admissible. Hence (see \cite{Alves2}, Theorem 4.1):

\begin{corollary}
Let $P=([mn],\leq )$ be the ordered Hamming space. Let $V=\mathbb{F}_{q}^{mn}$ be the vector space endowed with the poset metric $d_P$ induced by the
poset $P$. Then the group of simmetries $\textit{Symm}(V,d_{P})$ is isomorphic to the semi-direct product $$\left(G_n \right)^m \mathbb{o} S_m$$ where $G_n := 
(S_{q})^{q^{n-1}} \mathbb{o} ( \ldots ((S_{q})^{q} \mathbb{o} S_{q}) \ldots )$. In particular, $$\vert \textit{Symm}(V,d_{P}) \vert = (q!)^{m \cdot \frac{q^n-1}{q-1}+m}\cdot m!.$$
\end{corollary}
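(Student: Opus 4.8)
The plan is to recognize this statement as the special case of Theorem~\ref{Theorem RT} in which the block structure is trivial, namely $\pi=(1,1,\ldots,1)$, so that $N=mn$ and $V=\mathbb{F}_q^{mn}$ carries the poset metric $d_P$ induced by the union of $m$ disjoint chains of length $n$. Feeding this choice of $\pi$ into Theorem~\ref{Theorem RT} already yields
\[
\textit{Symm}(V,d_P)\cong\left(\prod_{i=1}^m G_{i,\pi_i,n}\right)\rtimes S_\pi,
\]
so the entire task reduces to identifying the two ingredients $\prod_{i=1}^m G_{i,\pi_i,n}$ and $S_\pi$ explicitly when every block has dimension one, and then taking cardinalities.

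First I would determine $S_\pi$. Here $\pi_i=(1,1,\ldots,1)$ for every $i$, so all block dimensions $k_{il}$ equal $1$; hence the admissibility condition ``$\sigma(i)=j$ implies $k_{il}=k_{jl}$ for all $l$'' is satisfied by \emph{every} permutation, and therefore $S_\pi=S_m$. Next I would identify each factor $G_{i,\pi_i,n}$: by definition it is the symmetry group of the single-chain space $(U_i,d_{([n],\pi_i)})$, which is precisely the linear ordered block space of Section~3 with trivial block structure. The case $\pi=(1,\ldots,1)$ of Corollary~\ref{corolario} identifies this group with
\[
G_{i,\pi_i,n}\cong G_n=(S_q)^{q^{n-1}}\rtimes\bigl(\ldots((S_q)^q\rtimes S_q)\ldots\bigr)
\]
for each $i$. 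Since all the $\pi_i$ coincide, the $m$ factors are isomorphic to the same group $G_n$, giving $\prod_{i=1}^m G_{i,\pi_i,n}\cong (G_n)^m$. Substituting $S_\pi=S_m$ and this product into the displayed isomorphism from Theorem~\ref{Theorem RT} yields $\textit{Symm}(V,d_P)\cong (G_n)^m\rtimes S_m$, which is the first assertion.

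For the order count I would simply pass to cardinalities. The single-chain corollaries of Section~3 record $|G_n|=(q!)^{\frac{q^n-1}{q-1}+1}$, and since the product is semidirect we have $|(G_n)^m\rtimes S_m|=|G_n|^m\cdot|S_m|$, whence
\[
|\textit{Symm}(V,d_P)|=\left((q!)^{\frac{q^n-1}{q-1}+1}\right)^{m}\cdot m!=(q!)^{m\cdot\frac{q^n-1}{q-1}+m}\cdot m!,
\]
as claimed. Every step here is a direct specialization, so there is no genuine analytic obstacle; the only points demanding care are the two identifications, namely verifying that triviality of the block structure forces $S_\pi$ to be the \emph{full} symmetric group $S_m$ (rather than a proper product of symmetric groups, as arises for unequal block dimensions in the antichain corollary) and that each single-chain symmetry group is exactly the iterated semidirect product $G_n$. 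Once these are in place, both the structural and the numerical claims follow mechanically.
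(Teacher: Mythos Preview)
Your proposal is correct and follows exactly the paper's approach: the paper treats this corollary as the specialization $\pi=(1,\ldots,1)$ of Theorem~\ref{Theorem RT}, noting just beforehand that in this case every permutation is admissible so $S_\pi=S_m$, and relying on the single-chain corollaries of Section~3 for the structure and order of $G_n$. Your write-up simply spells out these identifications and the cardinality computation in more detail than the paper does.
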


\section{Automorphisms}

The group of automorphisms of $\left( V,d_{(P,\pi)}\right) $ is easily
deduced from the results above. Let $T=T_{\sigma} \circ g$ be a symmetry. Since $%
T_{\sigma}$ is linear, the linearity of $T$ is a matter of whether $g$ is
linear or not. Now, if $g=\left( g_{1},g_{2},\ldots ,g_{m}\right) $ is
linear, then each component $g_{i}$ must also be linear; since each $g_{i}$
is bijective, $g_{i}$ is in the group $Aut\left( U_{i}\right) $ of linear
automorphisms of $U_{i}$. Therefore $g\in \prod_{i=1}^{m}Aut\left(
U_{i}\right) $. On the other hand, any element of this group is a linear
symmetry. Hence:

\begin{theorem}
The automorphism group $Aut\left( V,d_{(P,\pi)}\right) $ of $\left( V,d_{(P,\pi)}\right)$ is isomorphic to $$\left(\prod_{i=1}^{m}Aut\left( V_{i}\right)\right) \mathbb{o} S_{\pi }.$$
\end{theorem}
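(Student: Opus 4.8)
The plan is to obtain this as a direct consequence of Theorem~\ref{Theorem RT}, by viewing the automorphism group as the set of linear elements of $\textrm{Symm}(V,d_{(P,\pi)})$. First I would use that a linear map necessarily fixes the origin, so an automorphism $T$ of $(V,d_{(P,\pi)})$ is in particular a symmetry with $T(0)=0$; by Lemma~\ref{lemma} it factors as $T=T_{\sigma}\circ g$ with $\sigma\in S_{\pi}$ and $g=(g_{1},\ldots,g_{m})\in\prod_{i=1}^{m}G_{i,\pi_{i},n}$. Because $\sigma$ is admissible, $T_{\sigma}$ acts by permuting the block-rows of the matrix representation of $V$, which is an $\mathbb{F}_{q}$-linear operation; hence $T_{\sigma}$ is a linear bijection of $V$, and $g=T_{\sigma}^{-1}\circ T$ is linear precisely when $T$ is. So the problem reduces to determining which $g\in\prod_{i=1}^{m}G_{i,\pi_{i},n}$ are linear.

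For that step I would note that $g$ acts on the canonical decomposition componentwise, $g(v_{1}+\cdots+v_{m})=g_{1}(v_{1})+\cdots+g_{m}(v_{m})$, so $g$ is linear if and only if each restriction $g_{i}=g|_{U_{i}}$ is linear; and a bijective, linear, distance-preserving self-map $g_{i}$ of the linear ordered block space $U_{i}$ is by definition an element of its automorphism group $\textrm{Aut}(U_{i})$. Conversely, every $T_{\sigma}$ with $\sigma\in S_{\pi}$ and every element of $\prod_{i=1}^{m}\textrm{Aut}(U_{i})$ is a linear symmetry of $V$ (by Lemmas~\ref{n} and~\ref{m}, together with the linearity of a block-row permutation), and a composition of linear symmetries is again one. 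Therefore $\textrm{Aut}(V,d_{(P,\pi)})$ is exactly the set of maps $T_{\sigma}\circ g$ with $\sigma\in S_{\pi}$ and $g\in\prod_{i=1}^{m}\textrm{Aut}(U_{i})$; in particular $\textrm{Aut}(V,d_{(P,\pi)})=\left(\prod_{i=1}^{m}\textrm{Aut}(U_{i})\right)\cdot S_{\pi}$.

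Finally I would assemble the semidirect-product structure by restricting the bookkeeping of Proposition~\ref{Gmn} (and the conjugation computation in Corollary~\ref{corolario}) to these linear subgroups. One has $\left(\prod_{i=1}^{m}\textrm{Aut}(U_{i})\right)\cap S_{\pi}=\{id_{V}\}$, since the larger intersection $\left(\prod_{i=1}^{m}G_{i,\pi_{i},n}\right)\cap S_{\pi}$ is already trivial; and conjugation by any $T_{\sigma}$ maps $\prod_{i=1}^{m}\textrm{Aut}(U_{i})$ onto itself, because it already maps $\prod_{i=1}^{m}G_{i,\pi_{i},n}$ onto itself and, $T_{\sigma}$ being linear, it sends linear elements to linear elements. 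Combined with the description of $\textrm{Aut}(V,d_{(P,\pi)})$ as $\left(\prod_{i=1}^{m}\textrm{Aut}(U_{i})\right)\cdot S_{\pi}$ from the second step, this yields $\textrm{Aut}(V,d_{(P,\pi)})\cong\left(\prod_{i=1}^{m}\textrm{Aut}(U_{i})\right)\rtimes S_{\pi}$, which is the asserted isomorphism. I do not anticipate a genuine obstacle: the whole arithmetic of the statement is already packaged in Theorem~\ref{Theorem RT}, and the only point that merits attention is the normality of $\prod_{i=1}^{m}\textrm{Aut}(U_{i})$ under $S_{\pi}$, which is word-for-word the computation carried out previously for the full symmetry group.
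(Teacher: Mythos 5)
Your proof is correct and follows essentially the same route as the paper: decompose a linear symmetry as $T_{\sigma}\circ g$ via Lemma~\ref{lemma}, observe that $T_{\sigma}$ is linear so the linearity of $T$ reduces to that of $g$ and hence of each component $g_{i}$, and inherit the semidirect-product bookkeeping from Proposition~\ref{Gmn}. You merely make explicit a few details the paper leaves implicit (that an automorphism fixes the origin so Lemma~\ref{lemma} applies, and that conjugation by the linear map $T_{\sigma}$ preserves the subgroup of linear elements).
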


\begin{corollary}
Let $n=1$ and $\pi =\left( k_{1},k_{2},\ldots ,k_{m}\right) $ be a partition of $N$.
If
\[k_{1}=\ldots =k_{m_{1}}=l_{1},\ldots ,k_{m_{1}+\ldots +m_{l-1}+1}=\ldots
=k_{n}=l_{r}\] with $l_{1}>l_{2}>\ldots >l_{r}$, then
\[
\left| Aut\left( \Bbb{F}_{q}^{N},d_{(P,\pi)}\right) \right| = \left( \prod_{i=1}^{m}\left(
q^{k_{i}}-1\right) \left( q^{k_{i}}-q\right) \ldots \left(
q^{k_{i}}-q^{k_{i}-1}\right) \right) \cdot \left(
\prod_{j=1}^{l}m_{j}!\right) \text{.}
\]
\end{corollary}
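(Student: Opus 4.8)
The plan is to read off the order of $Aut(\mathbb{F}_q^N,d_{(P,\pi)})$ directly from the structure theorem for the automorphism group established just above, specialized to the case $n=1$. Here the poset $P=([m],\leq)$ is an antichain, each chain space $U_i$ reduces to the single block $V_i=\mathbb{F}_q^{k_i}$, and the preceding theorem gives
$$Aut\left(\mathbb{F}_q^{N},d_{(P,\pi)}\right)\cong\left(\prod_{i=1}^{m}Aut(V_i)\right)\rtimes S_\pi.$$
Since a finite semi-direct product $A\rtimes B$ has underlying set $A\times B$, its order equals $|A|\cdot|B|$; so it suffices to evaluate $\prod_{i=1}^{m}|Aut(V_i)|$ and $|S_\pi|$ separately.

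For the first factor, note that for a one-block space the $(P,\pi)$-weight of $v\in V_i$ is $0$ when $v=0$ and $1$ otherwise, so $d_{([1],\pi_i)}$ is the discrete metric and every bijection of $V_i$ is a symmetry; the linear ones are precisely the invertible linear maps, i.e. $Aut(V_i)=GL(V_i)\cong GL_{k_i}(\mathbb{F}_q)$. Hence
$$|Aut(V_i)|=\left(q^{k_i}-1\right)\left(q^{k_i}-q\right)\cdots\left(q^{k_i}-q^{k_i-1}\right),$$
the classical order of the general linear group, and taking the product over $1\leq i\leq m$ produces the first bracketed factor of the claimed formula.

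For the second factor, recall that $\sigma\in S_m$ is \emph{admissible} exactly when $\sigma(i)=j$ forces $k_{il}=k_{jl}$ for all $l$; with $n=1$ this says nothing more than that $\sigma$ sends each index to an index whose block has the same dimension. Partitioning $\{1,\dots,m\}$ into the classes on which the dimension is constant, with values $l_1>l_2>\dots>l_r$ and respective sizes $m_1,\dots,m_r$, an admissible permutation is just an arbitrary permutation within each such class; thus $S_\pi\cong S_{m_1}\times\cdots\times S_{m_r}$ and $|S_\pi|=\prod_{j=1}^{r}m_j!$. Multiplying the two counts yields the stated cardinality. The only steps requiring a little care are the translation of the admissibility condition into the dimension-partition description of $S_\pi$ and the identification $Aut(V_i)\cong GL_{k_i}(\mathbb{F}_q)$; everything else is routine bookkeeping with orders in a semi-direct product.
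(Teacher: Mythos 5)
Your proof is correct and takes essentially the same route as the paper: specialize the structure theorem to $n=1$, compute $\left|Aut(V_i)\right|=\left(q^{k_i}-1\right)\left(q^{k_i}-q\right)\cdots\left(q^{k_i}-q^{k_i-1}\right)$ (the paper obtains this by counting ordered bases, which is the same computation as quoting $\left|GL_{k_i}(\mathbb{F}_q)\right|$), identify $S_\pi\cong S_{m_1}\times\cdots\times S_{m_r}$, and multiply the orders in the semi-direct product. Your explicit justification that every invertible linear map of a single block is a metric automorphism is a welcome detail the paper leaves implicit.
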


\begin{proof}
Note initially that there is a bijection from $Aut\left(
V_{i}\right) $ and the family of all ordered bases of $V_{i}$: let
$\left( e_{1},e_{2},\ldots ,e_{k_{i}}\right) $ be an ordered basis
of $V_{i}$; if $T\in Aut\left( V_{i}\right) $, then $\left(
T\left( e_{1}\right) ,T\left( e_{2}\right) ,\ldots ,T\left(
e_{k_{i}}\right) \right) $ is an ordered basis of $V_{i}$;
if  $\left( v_{1},v_{2},\ldots ,v_{k_{i}}\right) $ is an ordered basis of $%
V_{i}$ then there exist a unique automorphism $T$ with $T\left(
e_{j}\right) =v_{j}$ for all $j\in \left\{ 1,2,\ldots
,k_{i}\right\} $. Since the number of ordered basis of $V_{i}$
equal
\[
\left( q^{k_{i}}-1\right) \left( q^{k_{i}}-q\right) \ldots \left(
q^{k_{i}}-q^{k_{i}-1}\right)
\]
follows that $\left| Aut\left( V_{i}\right) \right| =\left(
q^{k_{i}}-1\right) \left( q^{k_{i}}-q\right) \ldots \left(
q^{k_{i}}-q^{k_{i}-1}\right) $. From above theorem
\[
\left| Aut\left( \Bbb{F}_{q}^{N},d_{(P,\pi)}\right) \right| = \left(
\prod_{i=1}^{m}\left| Aut\left( V_{i}\right) \right| \right) \cdot \left| S_{\pi }\right| \text{.}
\]
Since $\left| S_{\pi }\right| =\prod_{j=1}^{l}m_{j}!$ the corollary follows.
\end{proof}

Restricting to the Hamming case again, $Aut\left( V_{i}\right) =Aut\left(
\Bbb{F}_{q}\right) \simeq \Bbb{F}_{q}^{*}$ and $S_{\pi }=S_{m}$. Hence:

\begin{corollary}
The automorphism group of $\left( \Bbb{F}_{q}^{m},d_{H}\right) $ is $%
\left( \Bbb{F}_{q}^{*}\right) ^{m} \mathbb{o} S_{m}$.
\end{corollary}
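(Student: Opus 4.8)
The plan is to recognize the Hamming space $(\mathbb{F}_q^m,d_H)$ as a special case of the ordered Hamming block space and then simply evaluate the two factors appearing in the preceding theorem. First I would observe that $(\mathbb{F}_q^m,d_H)$ is precisely the $(m,1,\pi)$-ordered Hamming block space with $P=([m],\leq)$ the antichain (the case $n=1$) and $\pi=(1,1,\ldots,1)$ the trivial partition: indeed, for such $P$ and $\pi$ one has $\omega_{(P,\pi)}(v)=|\langle supp(v)\rangle|=|supp(v)|$, which is exactly the number of nonzero coordinates of $v$, so $d_{(P,\pi)}=d_H$. Under this identification the canonical decomposition $V=U_1\oplus\cdots\oplus U_m$ has each $U_i=V_{i1}=\mathbb{F}_q$ a one-dimensional $\mathbb{F}_q$-vector space.

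Next I would invoke the theorem proved just above, which gives
\[
Aut\left(\mathbb{F}_q^m,d_H\right)\cong\left(\prod_{i=1}^m Aut(V_i)\right)\rtimes S_\pi,
\]
and compute each factor. For the first factor, $Aut(V_i)$ is the group of linear automorphisms of the one-dimensional $\mathbb{F}_q$-vector space $\mathbb{F}_q$; every such map has the form $x\mapsto\lambda x$ for a unique $\lambda\in\mathbb{F}_q^{*}$, so $Aut(V_i)\cong\mathbb{F}_q^{*}$ and hence $\prod_{i=1}^m Aut(V_i)\cong(\mathbb{F}_q^{*})^m$. For the second factor, recall that $\sigma\in S_m$ is admissible precisely when $\sigma(i)=j$ implies $k_{i1}=k_{j1}$; since every $k_{i1}=1$, this condition holds vacuously, so $S_\pi=S_m$. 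Substituting both computations into the displayed isomorphism yields $Aut(\mathbb{F}_q^m,d_H)\cong(\mathbb{F}_q^{*})^m\rtimes S_m$, as claimed.

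There is essentially no real obstacle here, since the corollary is a direct specialization of the automorphism theorem; the one point deserving a moment's care is the identification $Aut(\mathbb{F}_q)\cong\mathbb{F}_q^{*}$, where one must use the structure of $\mathbb{F}_q$ as a one-dimensional vector space over itself (scalar multiplications), not its structure as a field (field automorphisms), the latter group being $\mathrm{Gal}(\mathbb{F}_q/\mathbb{F}_p)$ and irrelevant to metric symmetries.
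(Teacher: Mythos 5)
Your proposal is correct and follows exactly the paper's route: the paper likewise obtains this corollary by specializing the automorphism theorem to $n=1$, $\pi=(1,\ldots,1)$, noting $Aut(V_i)=Aut(\mathbb{F}_q)\simeq\mathbb{F}_q^{*}$ and $S_\pi=S_m$. Your added remark distinguishing linear automorphisms of $\mathbb{F}_q$ as a vector space from field automorphisms is a sensible clarification but does not change the argument.
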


\end{document}